%
%

\documentclass[alpha-refs]{wiley-article}


\usepackage{color}
\usepackage{ulem}




\usepackage{amssymb}
\usepackage{amsmath}
\usepackage{graphicx}
\usepackage{xcolor}
\usepackage{float}
\usepackage{caption}
\usepackage{subcaption}
\PassOptionsToPackage{hyphens}{url}\usepackage[hidelinks]{hyperref}


\DeclareMathOperator*{\argmin}{arg\,min}

\title{Testing for no effect in regression problems: a permutation approach}

\author[1]{Micha{\l} G. Ciszewski}
\author[1]{Jakob S\"{o}hl}
\author[2]{Ton Leenen}
\author[3]{Bart van Trigt}
\author[1]{Geurt Jongbloed}

\affil[1]{Applied Mathematics, Delft University of Technology, Mekelweg 4, 2628 CD Delft, Netherlands}
\affil[2]{Faculty of Behavioural and Movement Sciences, VU University Amsterdam, Van Der Boechorststraat 7, 1081 BT Amsterdam, Netherlands}
\affil[3]{Biomechanical Engineering, Delft University of Technology, Mekelweg 2, 2628 CD Delft, Netherlands}

\corraddress{Micha{\l} G. Ciszewski, Applied Mathematics, Delft University of Technology, Mekelweg 4, 2628 CD Delft, Netherlands}
\corremail{M.G.Ciszewski@tudelft.nl}

\runningauthor{Ciszewski et al.}

\fundinginfo{\begin{minipage}[t]{.3\columnwidth}
		This work is part of the research programme CAS with project number P16-28 project 2, which is (partly) financed by the Dutch Research Council (NWO).
	\end{minipage}
	\hspace{0.02\linewidth}
	\begin{minipage}[t]{.6\linewidth}
		\centering\raisebox{\dimexpr \topskip-\height}{%
			\includegraphics[width=0.9\linewidth]{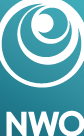}}
\end{minipage}}

\begin{document}

\maketitle

\begin{abstract}
Often the question arises whether $Y$ can be predicted based on $X$ using a certain model. 
Especially for highly flexible models such as neural networks one may ask whether a seemingly good prediction is actually better than fitting pure noise or whether it has to be attributed to the flexibility of the model. 
This paper proposes a rigorous permutation test to assess whether the prediction is better than the prediction of pure noise.
The test avoids any sample splitting and is based instead on generating new pairings of $(X_i,Y_j)$.
It introduces a new formulation of the null hypothesis and rigorous justification for the test, which distinguishes it from previous literature.
The theoretical findings are applied both to simulated data and to sensor data of tennis serves in an experimental context.
The simulation study underscores how the available information affects the test. It shows that the less informative the predictors, the lower the probability of rejecting the null hypothesis of fitting pure noise and emphasizes that detecting weaker dependence between variables requires a sufficient sample size.

\keywords{Permutation test; testing for no effect; sensor data; $R^2$; regression\newline
{\bf Word count (abstract)}: 176\newline
{\bf Word count (without appendix, abstract and figures)}: 3968}
\end{abstract}

\section{Introduction}

With the ubiquity of data often the question whether a response $Y$ can be predicted based on predictors $X$ arises.
The rise of highly capable machine learning and deep learning techniques increases the abilities to fit any kind of data. 
However, the abilities to fit pure noise are increasing as well. We propose a method to test whether a model is only fitting noise. 
It extends testing for no effect from linear to nonlinear models. 
No sample splitting is performed so the power of the test can rely on the size of the whole sample. 
No nested sequence of models is needed, in fact, no alternative models are needed at all.

Our method is based on recombining the pairings between predictors and responses through permutations.
In this way artificial reference datasets are created and the performance of the model on the original data can then be assessed by comparing it to the performances on the artificial reference datasets.
The purpose of our test is to ascertain whether the model is capable of fitting the data more effectively than mere random noise.
Our method is not restricted to linear models since it is not a test for specific parameters in the model. 
Rather it tests for the ability of a model to predict the responses.

The main contribution of this paper is a rigorous formulation of a permutation test for dependence between model predictions and responses.
The test uses $R^2$ as test statistic but can be performed with any measure of goodness of fit in regression analysis.
Because of its interpretability, $R^2$ is our test statistic of choice, but this can be adapted if necessary.
The method generates new pairings of $(X_i,Y_j)$ conditional on the $X_i$ for $i=1,...,n$ and $Y_j$ for $j=1,...,n$.
This paper introduces a new formulation of the null hypothesis and provides a rigorous justification for a permutation test that has been described in various forms in the literature,  for instance in the two-sample problem (\cite{good02, commenges03, huang06, hutson12}), the stochastic dominance problem (\cite{arboretti08, arboretti09a}) or the subgroup discovery problem (\cite{duivesteijn11}).
The main use case for this method is in the initial stages of the data analysis  to test whether a given model does only fit noise or is able to capture some essential structure in the data.

The outline of the paper is as follows.
Section \ref{sec:problem} formulates the problem and introduces necessary notation.
Section \ref{sec:permtest} contains the formal formulation of the null hypothesis, theoretical considerations and the succinct description of the method.
Section \ref{sec:application} contains the application of the permutation test to sensor data of tennis serves in order to demonstrate the method in practice and showcase its power in a real-life scenario.
Section \ref{sec:simstudy} presents a simulation study, where the permutation test is demonstrated in various scenarios for predictors and responses.

\section{Methodology}

\subsection{Problem description}\label{sec:problem}

Consider a regression setting. Given an observed pair $(X,Y)$, where $X$ is a random vector and $Y$ is a real random variable.
$Y$ is modelled as:
\begin{equation*}
	Y=f(X)+\epsilon,
\end{equation*}
where $\epsilon$ is a centered random variable independent of $X$ and $(f(X))_{f\in\mathcal{F}}$ for some class of functions $\mathcal{F}$.
An example of $\mathcal{F}$ could be a set of all linear functions corresponding to a linear regression model with fixed number of variables or a set of functions that can be described by a neural network.
Nonparametric classes of functions can also be considered, for instance a set of log-concave functions.
For the remainder of the paper, we will focus on $R^2$ as goodness-of-fit measure.

Since the actual relationship between $X$ and $Y$ is not known in practice, a chosen class of functions $\mathcal{F}$ through which that relationship is described does not need to be appropriate.
The class of functions $\mathcal{F}$ is misspecified if it does not contain the true $f$, while if it contains too many functions, the model might be overfitting by memorizing the noise $\epsilon$.
In a real world scenario, we are often facing datasets that feature high-dimensional, time-dependent or functional variables.
The question whether there is a relationship between $X$ and $Y$ and which model to choose for describing it, is crucial.
In this paper, we focus on the following aspect:
\begin{itemize}
	\item can a given class of functions $\mathcal{F}$ distinguish $Y$ from pure noise?
\end{itemize}

Consider this simple example.
Let $X_1, X_2$ be independent standard normal variables and $Y=X_1^2+X_2^2+\epsilon$, where $\epsilon\sim\mathcal{N}(0,0.01)$.
Consider a multi-layered neural net as a model of choice to predict $Y$ using $X_1$ and $X_2$.
For small sample sizes shuffling the vector of responses and applying our prediction model to this shuffled dataset can yield values of $R^2$ higher than values of $R^2$ calculated for the prediction model applied to the original dataset.
Ten random samples of size 10 were drawn.
Five yielded higher values of $R^2$ for at least one shuffled dataset than for the original pairing (we considered 200 shuffles of the sample).
In applied settings, where the sample size is fixed and difficult to increase, this presents an inherent issue.
Sample size has an immediate influence on the credibility of the model and needs to be taken into account.

Related problems have been addressed before in the literature in different settings and with a variety of solutions.
In this paper we focus on the permutation test.

\subsection{Permutation approach to testing for no effect}\label{sec:permtest}

The main appeal of permutation tests stems from the fact that they do not require any distributional assumptions on the population.
The lack of assumptions is increasingly more interesting to researchers as deep learning methods become more popular since they likewise do not rely on distributional assumptions.
Permutation tests are completely data-driven as pointed out by \cite{berry02}.
This can be very appealing as the data is the main factor in shaping the distribution of the test statistic, i.e. the test statistic can be chosen to be more easily interpretable without focusing on its distribution.

Our work differs from previous works mostly in the formulation of the null hypothesis.
In the literature different null hypotheses exist.
Some involve the concept of exchangeability, e.g. \cite{romano90, schmoyer94, good02, commenges03, huang06, hutson12}.
Some involve equality of means, e.g. \cite{zhang09} and some involve zeroing of the coefficients, e.g. \cite{cardot04}.
In contrast to this, we focus on the concept of independence, which is not widely used for permutation tests.
Permutation tests of independence have existed before, e.g. see \cite{bell67}.
However, we do not test independence of two random variables $X$ and $Y$, but rather we state the null hypothesis in terms of the model and whether it is able to capture the dependence.

The choice of the null hypothesis can also be directly connected to the model considered in the problem.
For instance, it is natural to use zeroing of the functional coefficient as the null hypothesis when considering a functional linear regression model e.g. see \cite{cardot04}.
We do not restrict ourselves to any particular model in our work, we only consider the model as given and the null hypothesis is not specifically tailored to the model.

Our goal is to investigate whether a class of functions $\mathcal{F}$ can capture any dependence between $X$ and $Y$.
We consider a test with null hypothesis stated as follows:
\begin{equation}\label{null}
	H_0: Y \textrm{ is independent of } (f(X))_{f\in\mathcal{F}}.
\end{equation}
$H_0$ represents the problem as described in section \ref{sec:problem}.
If it were true, then our class of functions $\mathcal{F}$ will not be able to capture the relation between $X$ and $Y$ in a meaningful manner.
Considering a dataset with permuted responses will be no different to class of functions $\mathcal{F}$ under $H_0$.
If $H_0$ is false, then the class of functions $\mathcal{F}$ will be able to capture some aspects of the relation between $X$ and $Y$, although it does not guarantee that the model is suitable and readily applicable.

The null hypothesis $H_0$ as stated in \eqref{null} does not guide the choice of the test statistic.
In order to choose a suitable test statistic, further understanding of $H_0$ is needed.
\begin{proposition}\label{perminvariance}
	Let $(X_1,Y_1),...,(X_n,Y_n)$ be an i.i.d. sample of $(X,Y)$.
	If $Y$ is independent of $(f(X))_{f\in\mathcal{F}}$, then for all $i=1,\dots,n$ the conditional distribution of
	\begin{equation}\label{eq:jointconddist}
		\big((f(X_i),Y_{\tau(i)})\big)_{f\in \mathcal{F}}
	\end{equation}
	given the empirical measure $P_n^X$ of $X_1,...,X_n$ and the empirical measure $P_n^Y$ of $Y_1,...,Y_n$ is the same for all permutations $\tau$ of set $\{1,...,n\}$.
\end{proposition}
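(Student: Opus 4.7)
The plan is to lift the single-variable independence in $H_0$ to a sample-level independence, couple it with the classical fact that an i.i.d.\ sample is exchangeable given its empirical measure, and finally project down to the target process via the tower property.

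First I would combine the i.i.d.\ structure with the pointwise hypothesis $Y\perp (f(X))_{f\in\mathcal{F}}$ to conclude that the vector $(Y_1,\dots,Y_n)$ and the array $\bigl((f(X_j))_{f\in\mathcal{F}}\bigr)_{j=1}^n$ are independent: the pairs $(X_j,Y_j)$ are independent across $j$, and within each pair $Y_j\perp (f(X_j))_{f\in\mathcal{F}}$, so the joint law factors. Next I would invoke the standard fact that for an i.i.d.\ sample the conditional distribution given $P_n^Y$ is uniform over orderings of the multiset of values, i.e.
\[
(Y_{\tau(1)},\dots,Y_{\tau(n)})\ \big|\ P_n^Y\ \stackrel{d}{=}\ (Y_1,\dots,Y_n)\ \big|\ P_n^Y
\]
for every permutation $\tau$ of $\{1,\dots,n\}$. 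By the joint independence established in the previous step, this distributional equality persists after enlarging the conditioning to include $\bigl((f(X_j))_{f\in\mathcal{F}}\bigr)_{j=1}^n$, and hence in particular $P_n^X$.

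Finally, the random element $((f(X_i),Y_{\tau(i)}))_{f\in\mathcal{F}}$ is a measurable function of the array $(f(X_j))_{f\in\mathcal{F},j=1,\dots,n}$ together with $Y_{\tau(i)}$, so the $\tau$-invariance of the $Y$-vector's conditional law transfers to it. A tower-property step then coarsens the conditioning back down to $\sigma(P_n^X,P_n^Y)$ to give the proposition. The main obstacle I anticipate is exactly this last coarsening: $\sigma\bigl((f(X_j))_{f,j}\bigr)$ records $f(X_j)$ labelled by the index $j$ while $\sigma(P_n^X)$ records only the unordered multiset of $X$-values, and in general neither contains the other. Handling this cleanly requires introducing a common finer $\sigma$-algebra (for instance $\sigma(X_1,\dots,X_n)$) and checking that the permutation invariance obtained at the finer level genuinely descends to $\sigma(P_n^X,P_n^Y)$ rather than being washed out by the intermediate averaging; verifying this interaction between the per-index information in $(f(X_j))$ and the pooled information in $P_n^X$ is where the real bookkeeping of the proof lies.
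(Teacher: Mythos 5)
Your steps 1--3 are correct, and they are in fact a more careful rendering of what the paper's own (very terse) proof tries to do: combine independence of the $Y$-array from the $(f(X_j))$-array with exchangeability of an i.i.d.\ sample given its empirical measure. The gap is exactly the coarsening step you flag at the end, and it is not bookkeeping --- it is fatal. The tower property lets you pass from a finer $\sigma$-algebra $\mathcal{G}$ down to $\sigma(P_n^X,P_n^Y)$ only if $\sigma(P_n^X,P_n^Y)\subseteq\mathcal{G}$. The $\sigma$-algebra in which you actually establish $\tau$-invariance, $\mathcal{G}=\sigma\bigl(((f(X_j))_{f\in\mathcal{F}})_{j\leq n},\,P_n^Y\bigr)$, does not contain $\sigma(P_n^X)$, so the phrase ``and hence in particular $P_n^X$'' is unjustified. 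Your proposed repair --- refining to $\sigma(X_1,\dots,X_n,P_n^Y)$ --- fails for a different reason: the hypothesis is $Y\perp(f(X))_{f\in\mathcal{F}}$, \emph{not} $Y\perp X$, so the $Y$-array need not be independent of the $X$-array, and conditionally on the full $X$-sample the $Y$-vector is in general not exchangeable (it can even be a deterministic function of the $X$'s). So the invariance you need is simply false at that finer level.

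Moreover, no argument can close this gap, because the proposition is false as literally stated. Take $n=2$, $X=(X^{(1)},X^{(2)})$ with i.i.d.\ Rademacher coordinates, $\mathcal{F}=\{x\mapsto x^{(1)}\}$ and $Y=X^{(2)}$; then $Y\perp f(X)$, so the hypothesis holds. On the positive-probability event $\{X_1,X_2\}=\{(1,1),(-1,-1)\}$, which is an atom of $\sigma(P_2^X,P_2^Y)$, the identity pairing $(f(X_1),Y_1)=(X_1^{(1)},X_1^{(2)})$ has conditional law uniform on $\{(1,1),(-1,-1)\}$, while the swapped pairing $(f(X_1),Y_2)=(X_1^{(1)},X_2^{(2)})$ has conditional law uniform on $\{(1,-1),(-1,1)\}$; the two laws differ. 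The trouble is that $P_n^X$ remembers the link between $f(X_j)$ and the components of $X_j$ that $Y_j$ may still depend on under the null. The paper's own proof glosses over precisely this point: it asserts that the conditional law of $(f_1(X_i),Y_{\tau(i)}),\dots,(f_m(X_i),Y_{\tau(i)})$ given $P_n^X$ and $P_n^Y$ is $\tau$-invariant ``thanks to the assumption of independence'', which is exactly the identity that fails above. Your argument does become a complete and correct proof if the conditioning variable $P_n^X$ is replaced by the empirical measure of the transformed sample $\bigl((f(X_j))_{f\in\mathcal{F}}\bigr)_{j\leq n}$ --- that measure \emph{is} $\mathcal{G}$-measurable, so your steps 1--3 plus the tower property go through verbatim, and this weaker statement still suffices for the $R^2$ result and the validity of the permutation test; alternatively, your $\sigma(X_1,\dots,X_n)$ route works under the stronger hypothesis $Y\perp X$.
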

\begin{proof}
	Let $i\in\{1,\dots,n\}$ be fixed.
	For a given finite collection of functions $f_1,f_2,...,f_m\in\mathcal{F}$ and a permutation $\tau$, the conditional joint distribution of $(f_1(X_i),Y_{\tau(i)}),...,(f_m(X_i),Y_{\tau(i)})$ given $P_n^X$  and $P_n^Y$ is the same as the joint distribution of
	\begin{equation}\label{eq:distspecial}
		(f_1(X_{i}),Y_{i}),..., (f_m(X_{i}),Y_{i}),
	\end{equation}
	thanks to the assumption of independence of $(f(X))_{f\in\mathcal{F}}$ and $Y$.
	Note that \eqref{eq:distspecial} is invariant with respect to the permutations of $Y_i$.
	This statement will also be true if extended to a joint distribution of \eqref{eq:jointconddist} thanks to Kolmogorov extension theorem (\cite{kolmogorov56}), hence the distribution of joint conditional distribution of \eqref{eq:jointconddist} given $P_n^X$ and $P_n^Y$ is invariant with respect to the permutation of $Y_i$.
\end{proof}

Before proposition \ref{perminvariance} is translated into a result in terms of $R^2$, we formally define $R^2$.
Consider $n$ realizations of $(X, Y)$ and denote them as $(x_1, y_1), ..., (x_n, y_n)$.
Let $L$ be a loss function and $\hat{f}$ be an empirical risk estimator in the sense that
\begin{equation}\label{eq:fhat}
	\hat{f}=\argmin_{f\in\mathcal{F}}\sum_{i=1}^n L(f(x_i\color{black}),y_i\color{black}).
\end{equation}
Let $\hat{f}(x_i)$ denote the prediction of $y_i$ for $i=1,...,n$.
Then
\begin{equation}\label{eq:rsquared}
	R^2=1-\frac{\sum_i(y_i-\hat{f}(x_i))^2}{\sum_i(y_i-\bar{y})^2},
\end{equation}
where $\bar{y}$ is the mean of the $y_i$.
This definition of $R^2$ is the natural one if the loss function $L$ in equation \eqref{eq:fhat} is chosen to be the squared error loss.
In the context of $R^2$, proposition \ref{perminvariance} implies the following result.
\begin{proposition}\label{remarkrsq}
	Let $(X_1,Y_1),...,(X_n,Y_n)$ be an i.i.d. sample of $(X,Y)$.
	Assume that $Y$ is independent of $(f(X))_{f\in\mathcal{F}}$. 
	Fix a permutation $\tau$ of $\{1,...,n\}$ and a loss function $L$ defining an empirical risk estimator as in \eqref{eq:fhat}. Then, conditionally on the empirical measure $P_n^X$ of $X_1,...,X_n$ and the empirical measure $P_n^Y$ of $Y_1,...,Y_n$, the distribution of $R^2$ calculated based on data $\{(X_i,Y_{\tau(i)})\}$ using the aforementioned emprirical risk estimator does not depend on $\tau$.
\end{proposition}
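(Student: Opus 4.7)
The plan is to express the permuted $R^2$ as a measurable function of a random object whose conditional law given $(P_n^X, P_n^Y)$ is already known, by Proposition \ref{perminvariance}, to be $\tau$-invariant, and then push the invariance forward through that function.

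First I would unwind the definitions. For the permuted data $(X_1, Y_{\tau(1)}), \ldots, (X_n, Y_{\tau(n)})$, the empirical risk minimizer $\hat{f}$ from \eqref{eq:fhat} depends on the sample only through the values $\{(f(X_i), Y_{\tau(i)})\}_{f \in \mathcal{F},\, i = 1, \ldots, n}$ (the objective $\sum_i L(f(X_i), Y_{\tau(i)})$ is built out of exactly these), so, modulo a standard measurable selection when the minimizer is not unique, $\hat{f}$ is a measurable functional of this collection, and each prediction $\hat{f}(X_i)$ entering \eqref{eq:rsquared} is itself a coordinate of the collection. The sample mean $\bar{Y}$ and the denominator $\sum_i (Y_{\tau(i)} - \bar{Y})^2$ depend only on the multiset of responses, so they are $P_n^Y$-measurable and $\tau$-invariant. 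Putting these observations together gives a measurable map $\Psi$ with
\[
R^2 \;=\; \Psi\bigl(\{(f(X_i), Y_{\tau(i)})\}_{f,\,i},\; P_n^Y\bigr).
\]

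Next I would strengthen Proposition \ref{perminvariance} to the joint statement that this step really uses: conditionally on $(P_n^X, P_n^Y)$, the full collection $\{(f(X_i), Y_{\tau(i)})\}_{f,\,i}$ has a distribution that does not depend on $\tau$. The proof of Proposition \ref{perminvariance} adapts verbatim once one replaces the fixed single index $i$ by a finite family $\{(f_k(X_{i_k}), Y_{\tau(i_k)})\}_{k=1}^m$ indexed by pairs: independence of $(f(X))_{f \in \mathcal{F}}$ and $Y$, together with the i.i.d. structure of the sample, identifies the finite-dimensional law of $\{(f_k(X_{i_k}), Y_{\tau(i_k)})\}_k$ with that of $\{(f_k(X_{i_k}), Y_{i_k})\}_k$, which is manifestly unchanged by $\tau$, and the Kolmogorov extension theorem lifts this to the entire collection.

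Once this joint invariance is in place, pushing it forward through the measurable map $\Psi(\cdot, P_n^Y)$ immediately yields that the conditional law of $R^2$ given $(P_n^X, P_n^Y)$ does not depend on $\tau$. The main obstacle, and essentially the only step beyond bookkeeping, is the promotion from the per-$i$ marginal invariance stated in Proposition \ref{perminvariance} to the joint invariance over all $(f, i)$ pairs; once this is justified, the rest is a clean push-forward through a measurable functional.
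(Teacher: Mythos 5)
Your proof is correct and follows the same basic route as the paper's: realize the permuted $R^2$ as a functional of an object whose conditional law given $(P_n^X,P_n^Y)$ is $\tau$-invariant, then push the invariance forward through that functional. The differences are in the intermediate object and in one point of rigor. The paper passes through the two-dimensional process $\bigl(\sum_{i}(Y_{\tau(i)}-f(X_i))^2,\ \sum_i L(f(X_i),Y_{\tau(i)})\bigr)_{f\in\mathcal{F}}$, asserts that its conditional law is $\tau$-invariant as a direct consequence of Proposition \ref{perminvariance}, and then plugs the argmin of the second coordinate into the first; you instead work with the finer collection $\{(f(X_i),Y_{\tau(i)})\}_{f,i}$, of which the paper's two sums are themselves measurable functionals. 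The substantive point in your favor is that you notice that Proposition \ref{perminvariance} as stated is a per-$i$ statement (joint in $f$, marginal in $i$), which by itself does not determine the law of sums over $i$; the joint-over-$(f,i)$ invariance is what is actually needed, and you supply it by rerunning the argument of Proposition \ref{perminvariance} (block independence of the $X$- and $Y$-samples plus the Kolmogorov extension theorem). The paper's proof uses this strengthened version tacitly, so your proposal makes explicit a step the paper glosses over. Both arguments treat the measurability of the empirical risk minimizer over a possibly uncountable $\mathcal{F}$ at the same informal level; your appeal to measurable selection is no worse than the paper's plug-in of the argmin.
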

\begin{proof}
	Proposition \ref{perminvariance} implies that the conditional distribution of 
	\begin{equation}\label{eq:jointconddistr2}
		\left(\sum\limits_{i=1}^n(Y_{\tau(i)}-f(X_i))^2,\sum\limits_{i=1}^nL(f(X_i),Y_{\tau(i)})\right)_{f\in \mathcal{F}}
	\end{equation}
	given $P_n^X$ and $P_n^Y$ is the same for all permutations $\tau$ of set $\{1,...,n\}$.
	This is a two-dimensional empirical process indexed by class of functions $\mathcal{F}$.
	Plugging in the $\argmin$ of the second component into the first component still gives a distribution that does not depend on $\tau$.
	Hence, combining the definition \eqref{eq:fhat} of $\hat f$ and \eqref{eq:rsquared} of $R^2$, we conclude that for each permutation $\tau$, $R^2$ calculated for $\{(X_i,Y_{\tau(i)})\}$ is sampled from the same distribution conditioned on $P_n^X$ and $P_n^Y$.
\end{proof}
This allows us to consider $R^2$ as a viable choice for the test statistic.
Under the null hypothesis, the $R^2$ as calculated for $(x_i,y_i)$ is sampled from the same distribution as the $R^2$ calculated for $(x_i,y_{\tau(i)})$ for some permutation $\tau$.
The test itself is based on permutations of the pairings $(x_i,y_i)$.
We reject $H_0$ only if the observed $R^2$ is much larger than "most" of the $R^2$ obtained via random permutations.
Essentially we compare the observed $R^2$ to the distribution of $R^2$ under $H_0$ given specific realizations of $X$ and $Y$, but not their pairings.
It is notable that $R^2$ can also be replaced by some other statistic, as long as it can be calculated using the sample $\{(f(x_i),y_i)\}_i$.
Proposition \ref{perminvariance} permits other statistics to be used instead of $R^2$.
Taking $R^2$ as the test statistic is equivalent to taking empirical risk with respect to quadratic loss as the test statistic.
In that sense, the other tests can also be constructed by considering empirical risks with respect to other losses, e.g. absolute loss or Huber loss.

If the class of functions $\mathcal{F}$ contains the constant functions and the predictors are optimized with respect to the quadratic loss, then $R^2$ calculated for a given $\mathcal{F}$ is always non-negative.
This is true, since given set of observations $\{y_i\}_{i=1,...,N}$, we can always choose $f(X)\equiv\frac{1}{N}\sum\limits_{i=1}^N y_i$ which yields $R^2=0$.
Note that including the constants in $\mathcal{F}$, does not disturb the independence of $Y$ and $(f(X))_{f\in\mathcal{F}}$, since $Y$ is always independent of a set of constant random variables.
While $R^2$ is always non-negative in linear regression models (if the intercept is included), that is not the case for instance in the setting of neural nets.

\renewcommand{\thefootnote}{\fnsymbol{footnote}}
\setcounter{footnote}{0}
\setfnsymbol{wiley}
Given a chosen $\alpha$ level\footnote{default $\alpha=0.05$}, the precise implementation of the test is as follows:
\begin{enumerate}
	\item given original pairings of $(x_i, y_i)$, calculate the $R^2$ of class of functions $\mathcal{F}$, which we will denote as $r_0^2$,\footnote{The specific method of prediction of $\hat{Y}_i$ is stated in \ref{eq:fhat}.}
	\item find the distribution of $R^2$ under the null hypothesis conditionally on observed $x_i$ and $y_{(i)}$ for $i=1,...,n$ (approximated by the empirical distribution function of $R^2$ values based on a uniform sample of permutations of original pairings $(x_i, y_i)$; for each sample $\{(x_i,y_{\tau(i)}):1\leq i\leq n\}$, where $\tau$ is a permutation, $R^2$ is calculated; notably, the model is refit for each permuted sample),
	\item if $r_0^2>q_{1-\alpha}$, where $q_{1-\alpha}$ is the $1-\alpha$ quantile of the empirical distribution of $R^2$ values, then we reject the null hypothesis, otherwise we do not reject it.
\end{enumerate}
Any tuning parameters used in point (1) and (2) are not adjusted for each permutation.
This implementation assumes that $R^2$ is the statistic of choice, but it can be adapted to suit other statistics as well.
The reason we prioritize $R^2$ is primarily because of its benefits in terms of interpretability and ease of use.
It is also important to note that in practice, determining the distribution of $R^2$ under the null hypothesis will not be exact in most cases.
To obtain the exact distribution we need to run through $n!$ permutations.
Even for $n>10$ the computational cost of such an operation is prohibitively expensive and sampling from the true distribution is more reasonable.

\begin{figure}[hbtp!]
	\begin{center}
		\includegraphics[width=\textwidth]{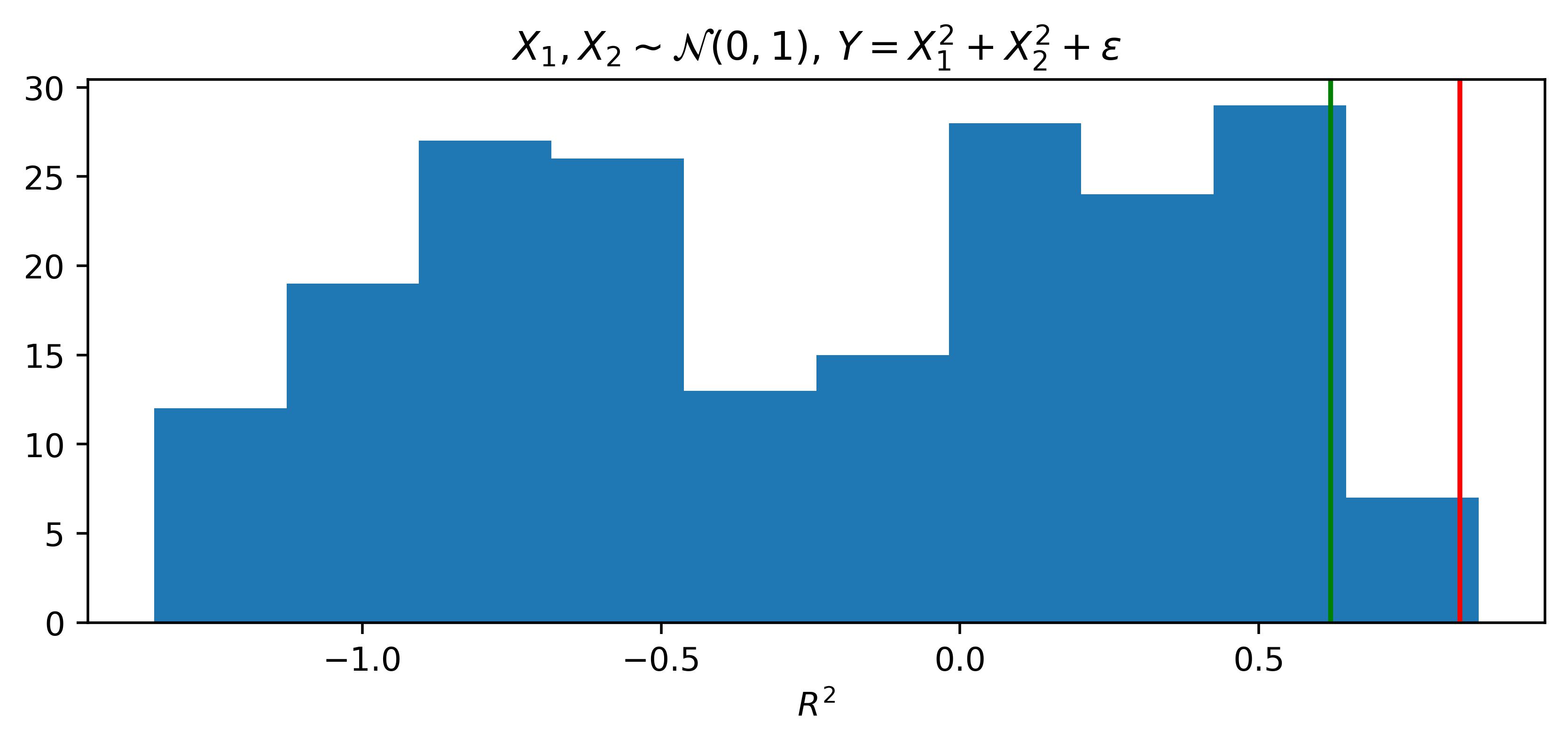}
		\caption{Histogram of the distribution of generated $R^2$ using permutation of $y$ values. The model considered here is a 3-layered neural net. The sample size is 10. The red line denotes the observed $R^2$ for the true pairings of $X$ and $Y$, the green line denotes the 95\%-quantile of the empirical distribution of $R^2$ (approximation using 200 permutations).}
		\label{fig:example}
	\end{center}
\end{figure}

$R^2$ is bounded by 1 from above for any model.
The proximity of $R^2$ values calculated from the permuted data or original $R^2$ values to 1 or to each other can provide insight into goodness of fit of a model.
The closer the values of $R^2$ for the permuted data to 1, the greater the capability of the model to fit to the noise.
Close proximity of $q_{1-\alpha}$ to $r_0^2$ in case of $r_0^2>q_{1-\alpha}$ and $r_0^2$ small implies that the model's predictive ability may not be satisfactory even though the null hypothesis is rejected by the test.
The test is widely applicable, because of its general form and easily adaptable to different types of models.
It also provides an interesting commentary on the predictive abilities of a chosen model.
In the event that the quantile $q_{1-\alpha}$ for one model, $\mathcal{F}_1$, significantly exceeds the same quantile for another model, $\mathcal{F}_2$, we conclude that $\mathcal{F}_1$ is either overfitting, indicating a need for reduction of the set of independent variables or model simplification, or is better able to extract meaningful information from unrelated data.\newline
\indent We close this section with a continuation of the example from section \ref{sec:problem}.
Let $X_1, X_2$ be independent standard normal variables and $Y=X_1^2+X_2^2+\epsilon$, where $\epsilon\sim\mathcal{N}(0,0.01)$.
We consider a neural net as a model of choice to predict $Y$ using $X_1$ and $X_2$.
A random sample of size 10 is drawn.
We conduct the permutation test.
As seen in figure \ref{fig:example}, the test rejects the null hypothesis.
However, in the case of 2 out of 200 permutations the model achieves higher $R^2$ than in the case of the original pairings.
Even though the model is capable to capture the relationship between $X_1$, $X_2$ and $Y$, there are permutations of the vector of responses that can lead to a better performance of the model.

\section{Application}

\subsection{Tennis serve dataset}\label{sec:application}

This section concerns an application of the permutation test to a tennis serve dataset.
Seven professional athletes wearing inertial measurement units (IMUs) performed tennis serves.
Each athlete followed a protocol of first and second serves.
Sensors were placed on 4 body parts: lower and upper arms, trunk and pelvis as can be seen in fig. \ref{fig:segments}.
Each IMU contained a triaxial accelerometer and triaxial gyroscope.
The data consists of 7 uninterrupted time series of 24-dimensional data (4 body parts $\times$ 2 types of sensors $\times$ 3 axes).
The dataset is further described in the Master thesis (\cite{faneker21}).

\begin{figure}[hbtp!]
	\begin{center}
		\includegraphics[width=0.55\textwidth]{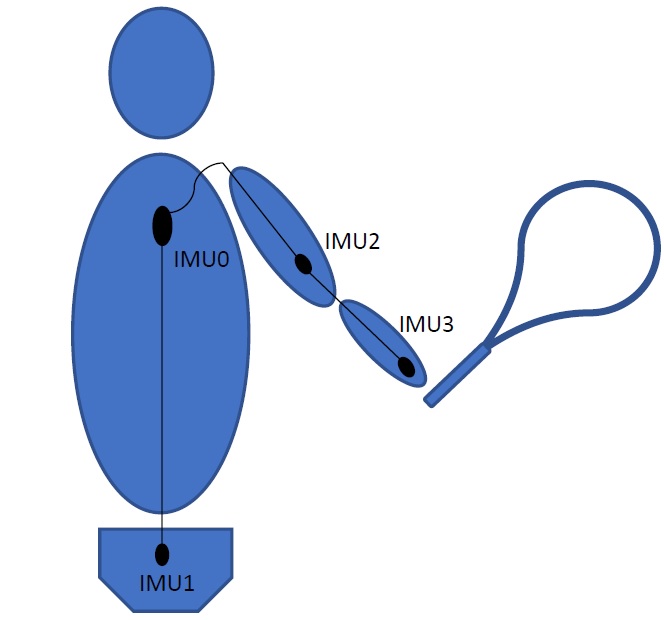}
		\caption{Segment model of right-handed player and racquet (back view, frontal plane).}
		\label{fig:segments}
	\end{center}
\end{figure}

Additionally, a dataset containing personal characteristics of the players and performance characteristics of each serve has been included.
The personal characteristics are the sex, age, height and weight of the players.
The performance characteristics are the ball velocity, an indication of whether the ball went in or out and the velocity-accuracy index (VA index).
The VA index for a single serve was introduced and motivated by \cite{kolman17} and is defined as follows:
\begin{equation}\label{eq:vaindex}
	\textrm{VA index}=\frac{(\textrm{ball velocity (kph)})^2}{100}\times\frac{\textrm{achieved points}}{9},
\end{equation}
where achieved points refer to the number of points assigned to a serve based on its closeness to a target area on the court (see fig. \ref{fig:vapoints}).
The number of points assigned to a serve is based on a new Serve Tennis Test (STT) adapted from \cite{kolman17}.
Originally, the point system was devised based on the ellipses in the serve box where aces were hit in male tennis matches during the Australian Open (\cite{whiteside17}).
However, the system has been improved upon since then.
The points are discrete.
Nine points are given for hitting the center of the target area. 
Six and three points are assigned for areas further from the center.
One point is assigned for a ball much further from the target area, but still a valid ball, while zero points are given to a serve which did go out.
Each participant performed approximately 48 serves.
In total, 29.6\% of serves were faults (and as a result had a VA-index 0).

\begin{figure}[hbtp!]
	\begin{center}
		\includegraphics[width=0.7\textwidth]{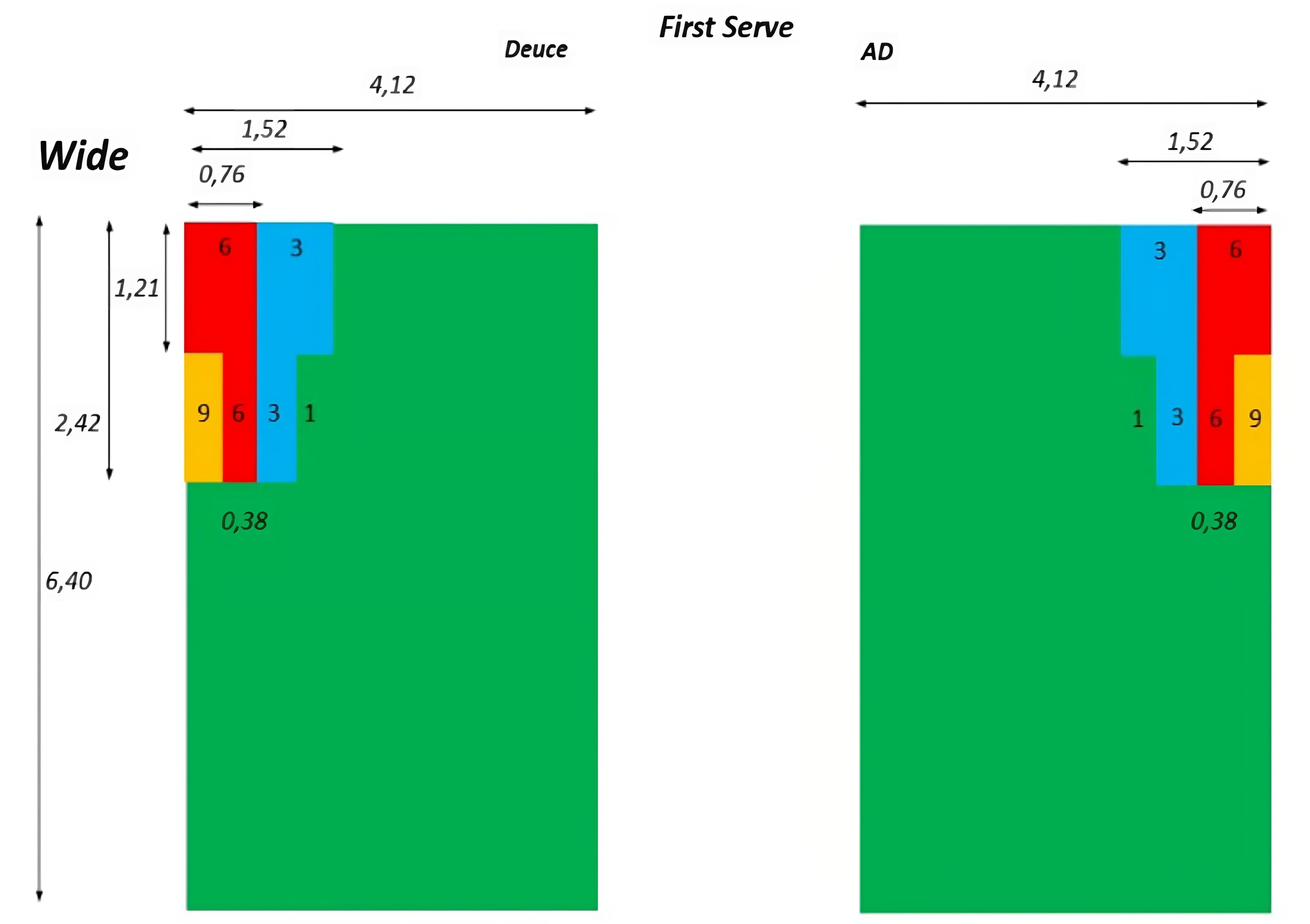}
		\caption{Target areas for the tennis serve. The scenario considered here is a serve in the wide direction. The points given on each target area correspond to the number of accuracy points needed to calculate the VA index of the serve.}
		\label{fig:vapoints}
	\end{center}
\end{figure}

We will use the tennis serve dataset in order to demonstrate an application of the permutation test to real life data.
We will focus on the prediction of ball speed and VA-index prediction.
The functional predictors have been transformed into vectors, using a Fourier basis representation, in order to be able to use the linear regression model with the class of functions $\mathcal{F}_{\textrm{LR}}$ and the neural net with the class of functions $\mathcal{F}_{\textrm{NN}}(300, 300, 300)$.
The choice to use Fourier coefficients as predictors was the most natural way of incorporating information from the time series.
First, a prediction of ball speed was considered.
The permutation test rejected the null hypothesis in cases of both models as seen in fig. \ref{fig:ballspeed_lr} and \ref{fig:ballspeed_nn}.
The test rejects the null hypothesis for both models, although higher values of $R^2$ achieved by the neural net for the original pairings suggest greater capabilities of that model to detect the dependence.

\begin{figure}
	\centering
	\begin{subfigure}[b]{\textwidth}
		\centering
		\includegraphics[width=\textwidth]{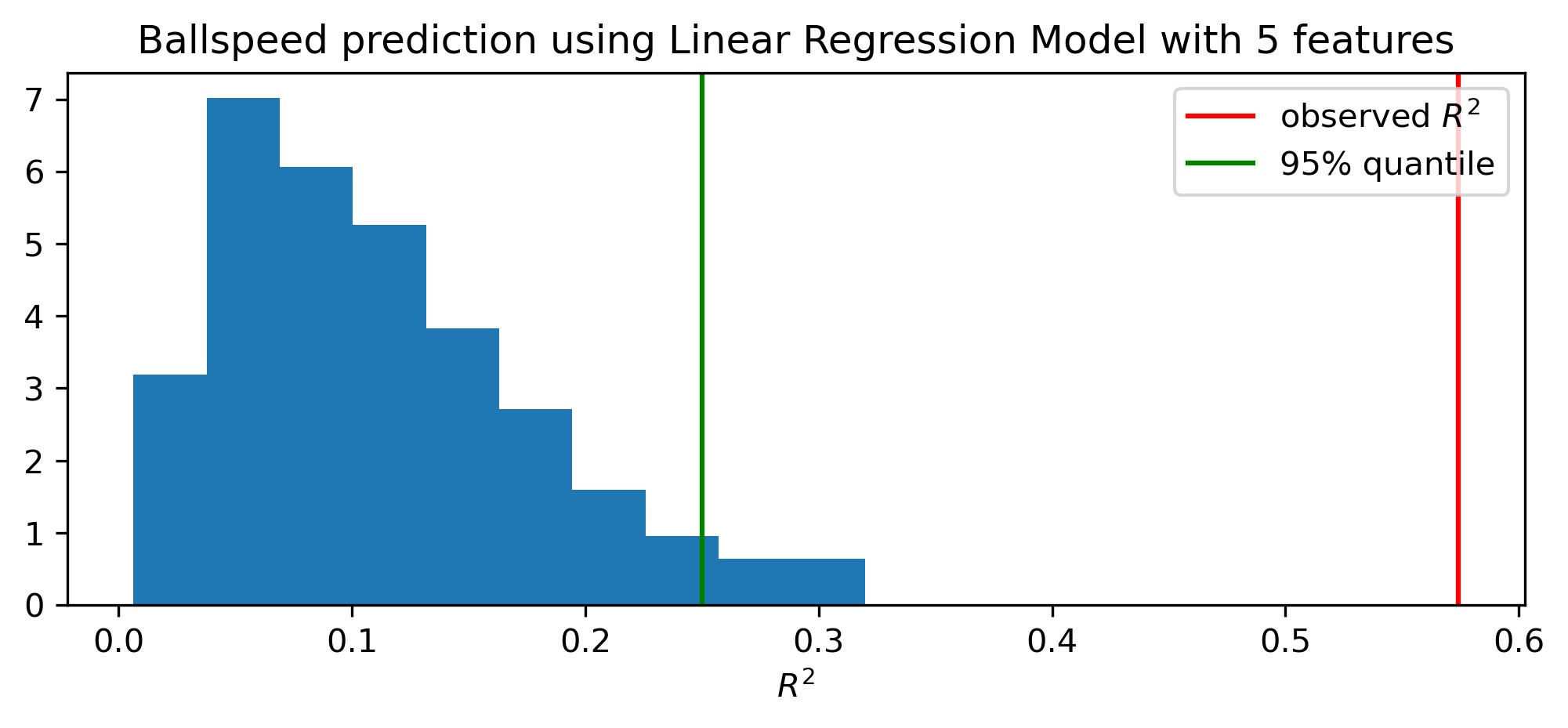}
		\caption{Histogram of the distribution of generated $R^2$ using permutation of $y$ values. The sample size is 46. The red line denotes the observed $R^2$ for the true pairings of $X$ and $Y$, the green line denotes the 95\%-quantile of the empirical distribution of $R^2$ (approximation using 200 permutations).}
		\label{fig:ballspeed_lr}
	\end{subfigure}
	\hfill
	\begin{subfigure}[b]{\textwidth}
		\centering
		\includegraphics[width=\textwidth]{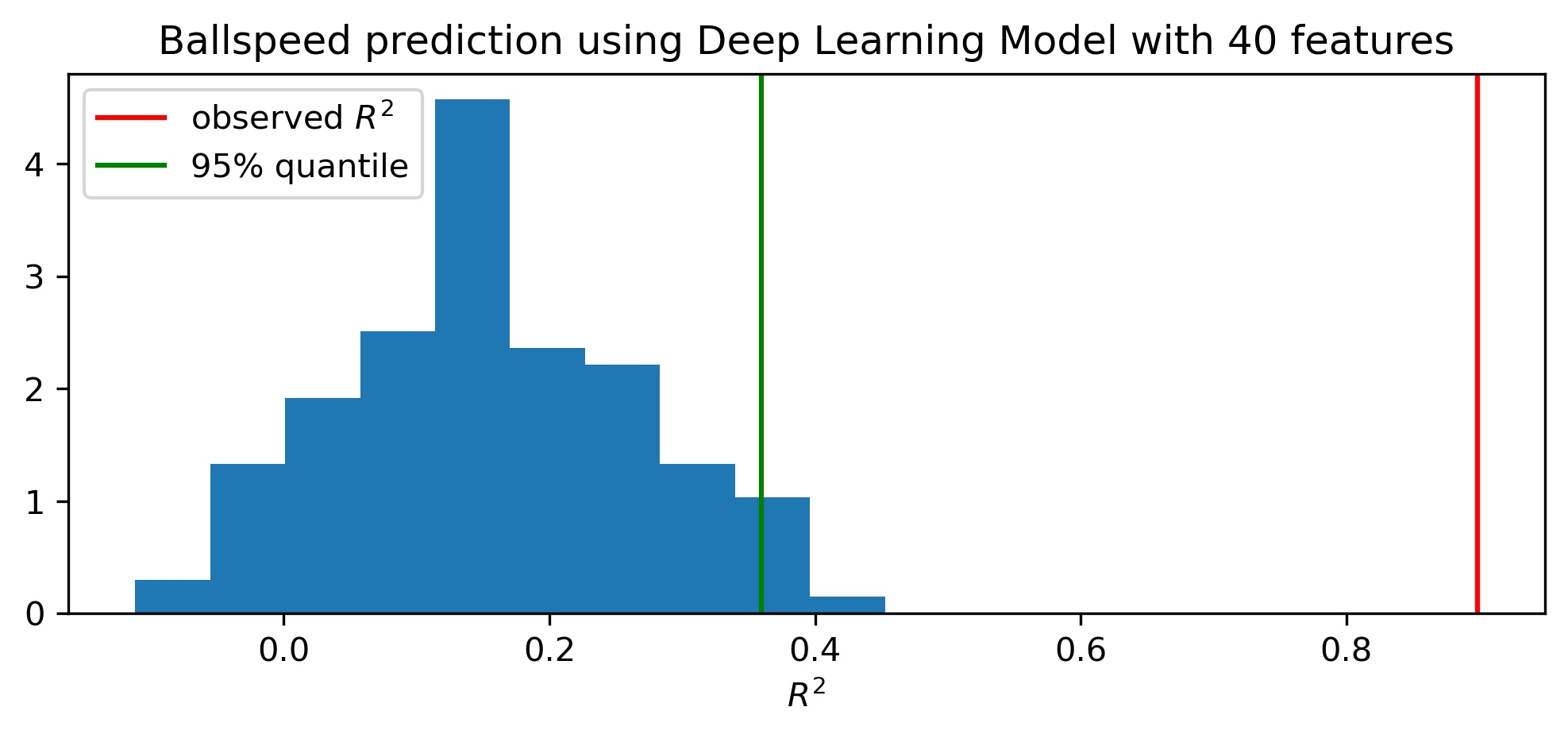}
		\caption{Histogram of the distribution of generated $R^2$ using permutation of $y$ values. The sample size is 46. The red line denotes the observed $R^2$ for the true pairings of $X$ and $Y$, the green line denotes the 95\%-quantile of the empirical distribution of $R^2$ (approximation using 200 permutations).}
		\label{fig:ballspeed_nn}
	\end{subfigure}
	\caption{Results of the permutation test for the ball speed prediction using $\mathcal{F}_{\textrm{LR}}$ and $\mathcal{F}_{\textrm{NN}}(300, 300, 300)$.}
\end{figure}

In the case of prediction of the VA-index as defined in \eqref{eq:vaindex}, the permutation test did not reject the null hypothesis for the linear regression model with the class of functions $\mathcal{F}_{\textrm{LR}}$ as well as for the neural net model with the class of functions $\mathcal{F}_{\textrm{NN}}(300, 300, 300)$.
Fig. \ref{fig:vaindex_lr} shows results for the linear regression model and fig. \ref{fig:vaindex_nn} shows results for the neural net.
The values of $R^2$ are quite low for both models and for many permutations of $y$-values the generated $R^2$ is much higher than the observed $R^2$ for the true pairings.
These results convince us that a good prediction using the linear regression model or the neural net model is not possible at the moment.
The issue may lie with the current size of the dataset or the number of serves per player or simply because the relation as can be described by the neural net is not strong.
The fact that the number of Fourier coefficients used in this prediction was increased to achieve more favourable $R^2$ for the original pairings of $(x_i,y_i)$ (at least in the case of the deep learning model), shows how complex this task is and additional information is needed in the data to increase the $R^2$.

\begin{figure}[hbtp!]
	\centering
	\begin{subfigure}[b]{\textwidth}
		\centering
		\includegraphics[width=0.95\textwidth]{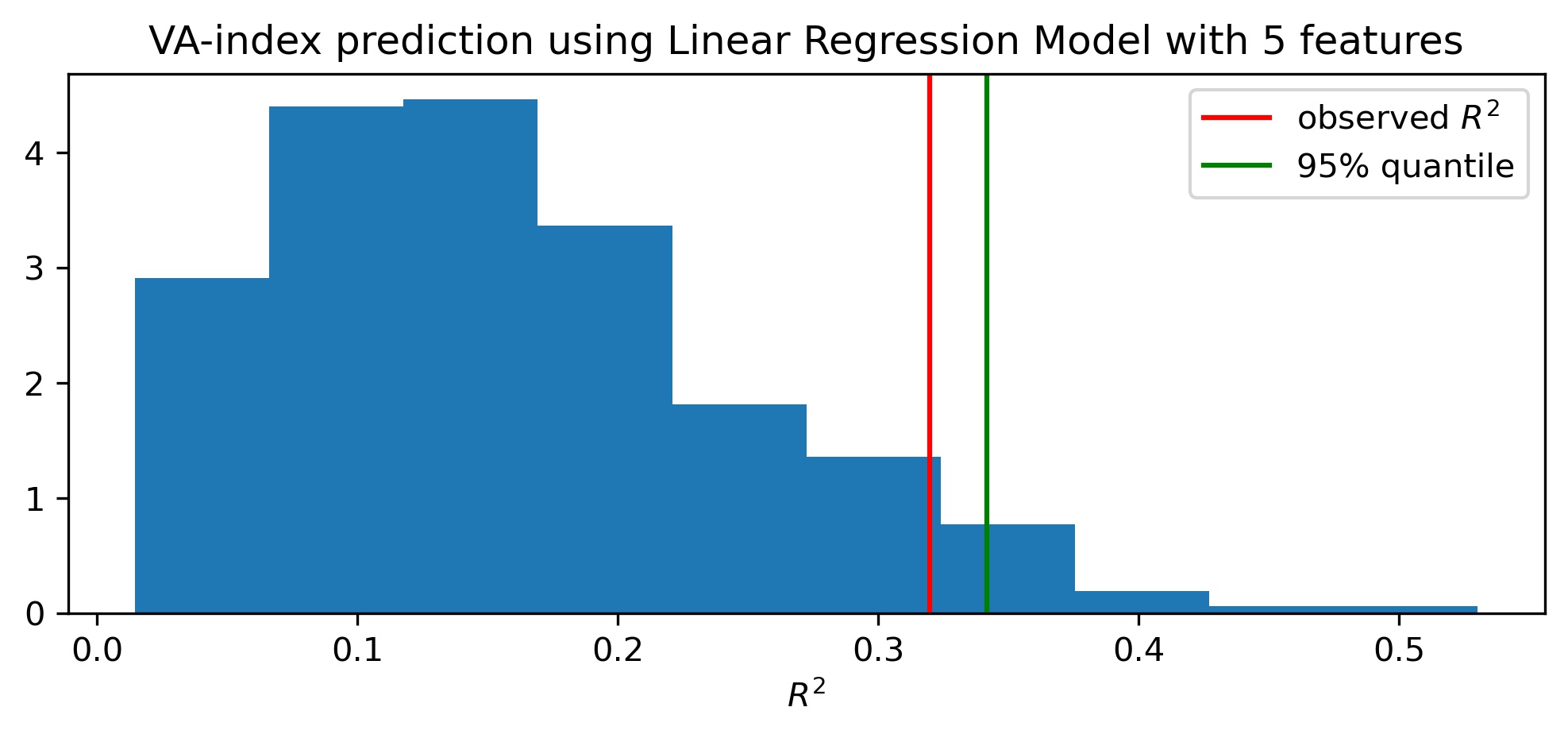}
		\caption{Histogram of the distribution of generated $R^2$ using permutation of $y$ values. The sample size is 34. The red line denotes the observed $R^2$ for the true pairings of $X$ and $Y$, the green line denotes the 95\%-quantile of the empirical distribution of $R^2$ (approximation using 200 permutations).}
		\label{fig:vaindex_lr}
	\end{subfigure}
	\hfill
	\begin{subfigure}[b]{\textwidth}
		\centering
		\includegraphics[width=0.95\textwidth]{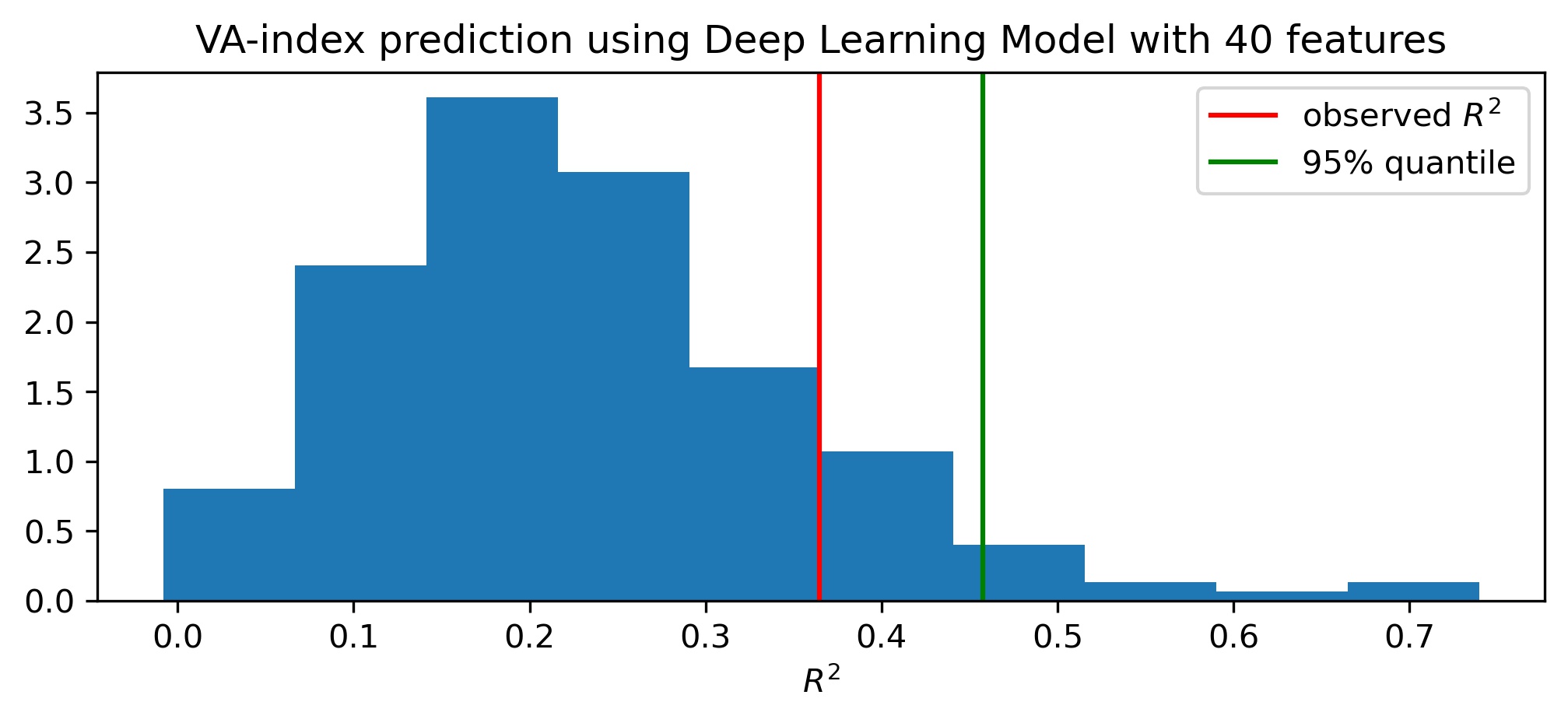}
		\caption{Histogram of the distribution of generated $R^2$ using permutation of $y$ values. The sample size is 34. The red line denotes the observed $R^2$ for the true pairings of $X$ and $Y$, the green line denotes the 95\%-quantile of the empirical distribution of $R^2$ (approximation using 200 permutations).}
		\label{fig:vaindex_nn}
	\end{subfigure}
	\caption{Results of the permutation test for the VA index prediction using $\mathcal{F}_{\textrm{LR}}$ and $\mathcal{F}_{\textrm{NN}}(300, 300, 300)$.}
\end{figure}

\newpage

\section{Conclusion and discussion}\label{sec:conclusions}

This paper concerns the theoretical foundations and the application of the permutation approach for testing whether a model can capture dependence structure between predictors and responses.
The test is a tool to determine whether a model is able to fit the data better than pure noise.
We are mostly interested whether $X$ has any effect on $Y$ and we pursue that interest with the help of a chosen, fixed model.
The null hypothesis is formulated in terms of independence of $Y$ and $(f(X))_{f\in\mathcal{F}}$ and in this form cannot be found in previous literature.
Proposition \ref{perminvariance} allows us to consider the test as a permutation test formally and proposition \ref{remarkrsq} allows us to consider $R^2$ as a test statistic.
This approach is data-centered and the results of the test depend on just one model without the need to directly compare between different models.
We also do not require sample splitting thus the test can rely on the power of the whole sample size, which can be vital in datasets of smaller size.
Our findings are supported through an application to the tennis serve dataset. 
In this case, it gave evidence that a seemingly well-fitting model is not necessarily trustworthy.
The prediction is either not possible with the given sensor data and model or a larger sample size is needed to predict the VA-index more accurately.

\subsection*{Declaration of interests}
The authors declare that they have no known competing financial interests or personal relationships that could have appeared to influence the work reported in this paper.

\subsection*{Code availability}
Custom code for simulation study can be found at: \url{https://github.com/mgciszewski/credibility_2023}.

\subsection*{Data availability}
The data that support the findings of this study are available from the corresponding author upon request.

\subsection*{Acknowledgments}
We thank the two anonymous referees for valuable comments to the manuscript, in particular on the formulation of an earlier version of the null hypothesis.

\bibliography{references}

\appendix

\section{Simulation study} \label{sec:simstudy}

We apply our permutation test in multiple scenarios.
This section will specifically focus on simulated datasets to assess the test's performance on datasets with varying dependence levels between $X$ and $Y$ and two different class of functions $\mathcal{F}$.
An empirical example will be considered in section \ref{sec:application}.
In all scenarios we consider the $R^2$-based test.

Two different models will be used to fit the data throughout this section.
One of them is a linear regression model, which models the relationship between a random vector $X$ and a random variable $Y$ in a linear manner: $Y=\beta\cdot X + \epsilon$.
The parameter vector $\beta$ will always be estimated using the least squares method.
Regardless of the length of vector $X$, the class of functions associated with this model will be referred to as $\mathcal{F}_{\textrm{LR}}$.
The other model we consider is a neural net.
A neural net is a collection of neurons arranged into layers, with neurons from different layers connected to each other.
Typically, a neural net consists of an input layer, multiple hidden layers and an output layer.
The estimation of neural nets' parameters, the weights associated with neurons and edges between them, is done by feeding multiple training sets of inputs and outputs into the net.
Weights are adjusted each time based on a predefined cost function.
Class of functions associated with neural nets will be referred to as $\mathcal{F}_{\textrm{NN}}$ with the number of neurons on each layer specified as a $k$-tuple, where $k$ refers to the number of layers, e.g. $\mathcal{F}_{\textrm{NN}}(30,30,30)$ is a neural net with 3 hidden layers, each of which contains 30 neurons.

In the first two examples, we will compare the permutation test to two existing methods: Spearman's rank correlation coefficient (also referred to as Spearman's $\rho$) and Kendall rank correlation coefficient (also referred to as Kendall's $\tau$).
Both are statistics used to measure the rank correlation between two variables and both can be used as test statistics in a test for independence of two variables.
Since, our examples have more than one explanatory variable, multiple statistics will be given.
It is worth noting that both statistics are not applicable when there is no natural ordering in the data, e.g. in the case of functional data when datapoints are functions.

Let $X_1,X_2\sim\mathcal{N}(0,1)$ and $Y\sim U([0,1])$ be independent random variables.
We consider two models and two classes of functions associated with them: $\mathcal{F}_{\textrm{LR}}$ and $\mathcal{F}_{\textrm{NN}}(30,30,30)$ and a sample of size $100$.
In both cases the null hypothesis is not rejected, see fig. \ref{fig:norelation_lin_reg_one} and \ref{fig:norelation_nn_one}.
We also consider 1000 repetitions of the experiment in the same setup to see the behavior of the test on a larger number of examples.
As seen in fig. \ref{fig:norelation_lin_reg_repeats} and \ref{fig:norelation_nn_repeats}, the null hypothesis is rejected in most repetitions for both models, namely 4.7\% for the linear model and 4.5\% for the neural net.
This shows that the rejection of the null hypothesis can still happen even in case of independence.
Most importantly, the rejection rate is close to the confidence level $\alpha=5\%$.
Spearman's $\rho$ test rejects the null hypothesis of independence of $X_1$ and $Y$ in $5.2\%$ of all cases and rejects the independence of $X_2$ and $Y$ in $5\%$ of all cases.
Kendall's $\tau$ test rejects the null hypothesis of independence of $X_1$ and $Y$ in $5.2\%$ of all cases and rejects the independence of $X_2$ and $Y$ in $5.3\%$ of all cases.
For both of these tests, the rejection rate is also close to the confidence level.

\begin{figure}
	\centering
	\begin{subfigure}[b]{\textwidth}
		\centering
		\includegraphics[width=\textwidth]{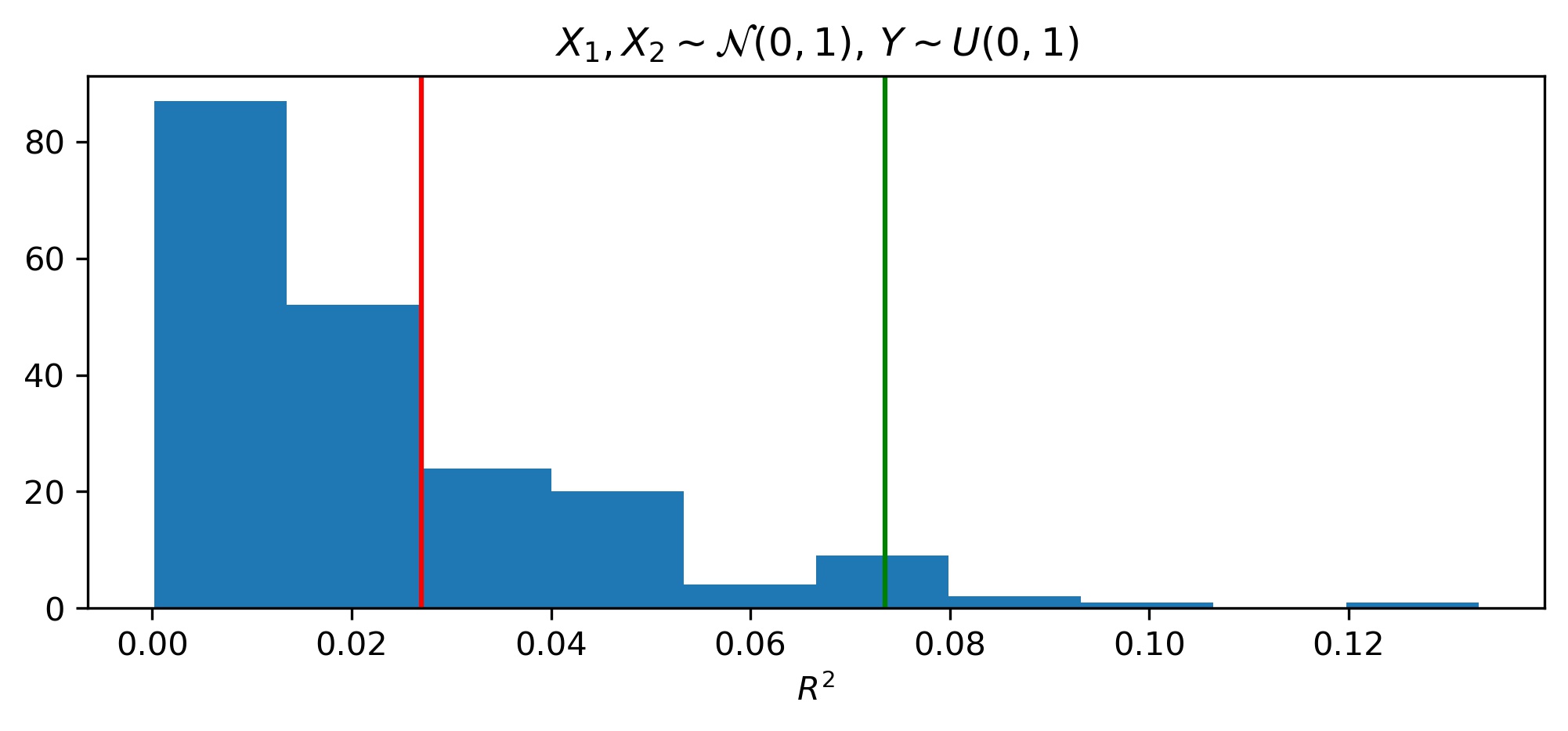}
		\caption{Histogram of the distribution of generated $R^2$ using permutation of $y$ values. The model considered here is linear regression with the class of functions $\mathcal{F}_{\textrm{LR}}$. The sample size is 100. The red line denotes the observed $R^2$ for the true pairings of $X$ and $Y$, the green line denotes the 95\%-quantile of the empirical distribution of $R^2$ (approximation using 200 permutations).}
		\label{fig:norelation_lin_reg_one}
	\end{subfigure}
	\hfill
	\begin{subfigure}[b]{\textwidth}
		\centering
		\includegraphics[width=\textwidth]{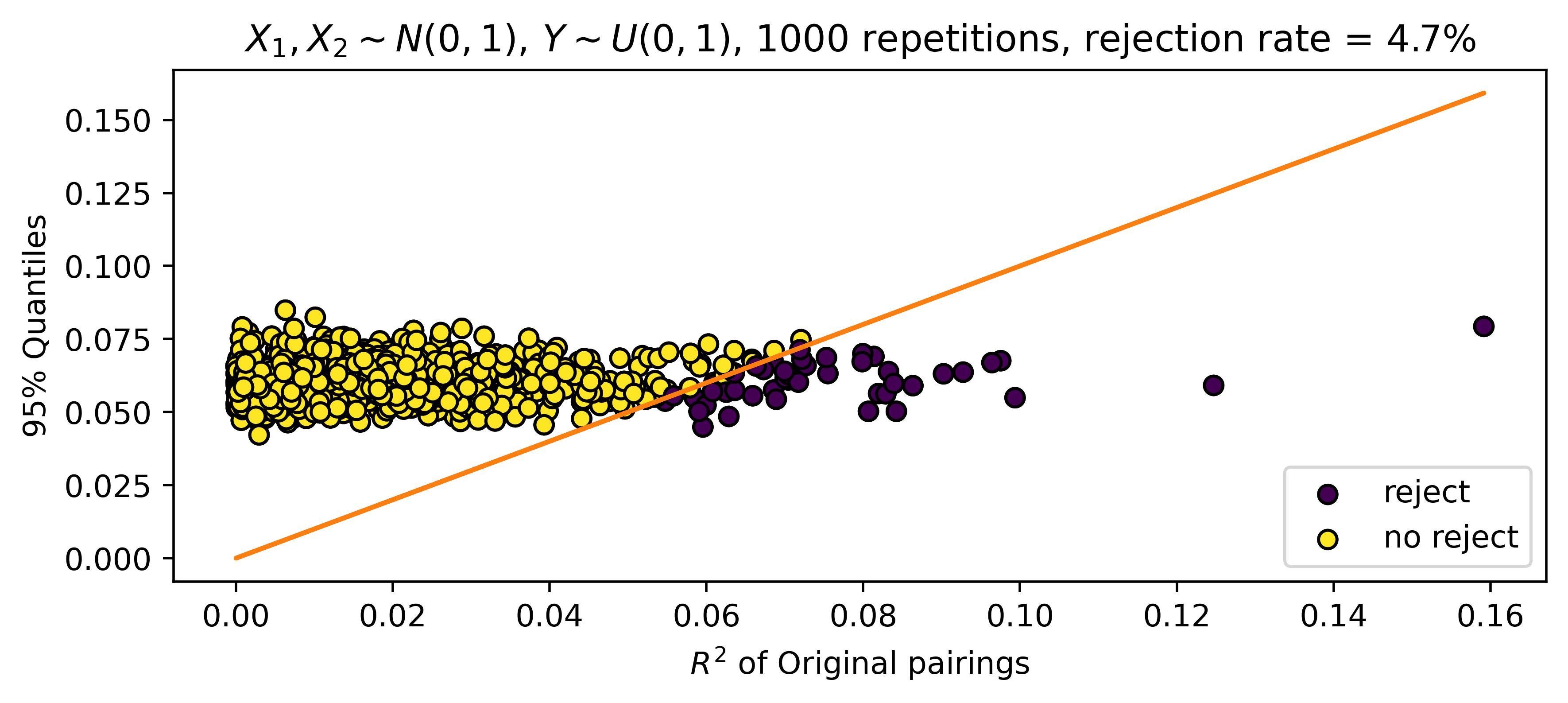}
		\caption{Scatterplot of the $R^2$ values for the original pairings against the 95\% quantiles of the empirical distribution of $R^2$. The orange line shows the identity function.}
		\label{fig:norelation_lin_reg_repeats}
	\end{subfigure}
	\caption{Results of the permutation test for $\mathcal{F}_{\textrm{LR}}$ with data generated in a following manner $X_1,X_2\sim\mathcal{N}(0,1)$ and $Y\sim U([0,1])$.}
	\label{fig:norelation_lin_reg}
\end{figure}

\begin{figure}
	\centering
	\begin{subfigure}[b]{\textwidth}
		\centering
		\includegraphics[width=\textwidth]{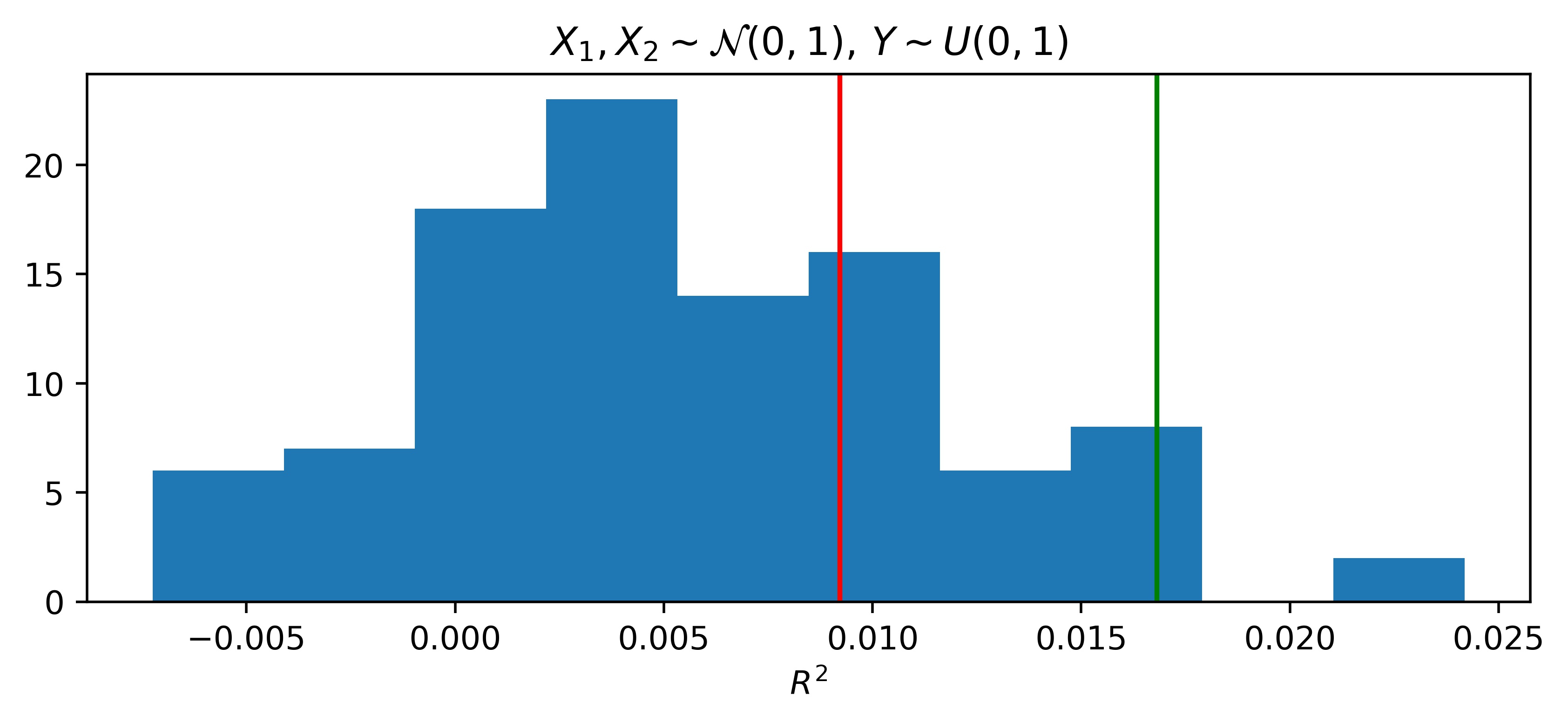}
		\caption{Histogram of the distribution of generated $R^2$ using permutation of $y$ values. The model considered here is a 3-layered neural net with the class of functions $\mathcal{F}_{\textrm{NN}}(30,30,30)$. The sample size is 100. The red line denotes the observed $R^2$ for the true pairings of $X$ and $Y$, the green line denotes the 95\%-quantile of the empirical distribution of $R^2$ (approximation using 200 permutations).}
		\label{fig:norelation_nn_one}
	\end{subfigure}
	\hfill
	\begin{subfigure}[b]{\textwidth}
		\centering
		\includegraphics[width=\textwidth]{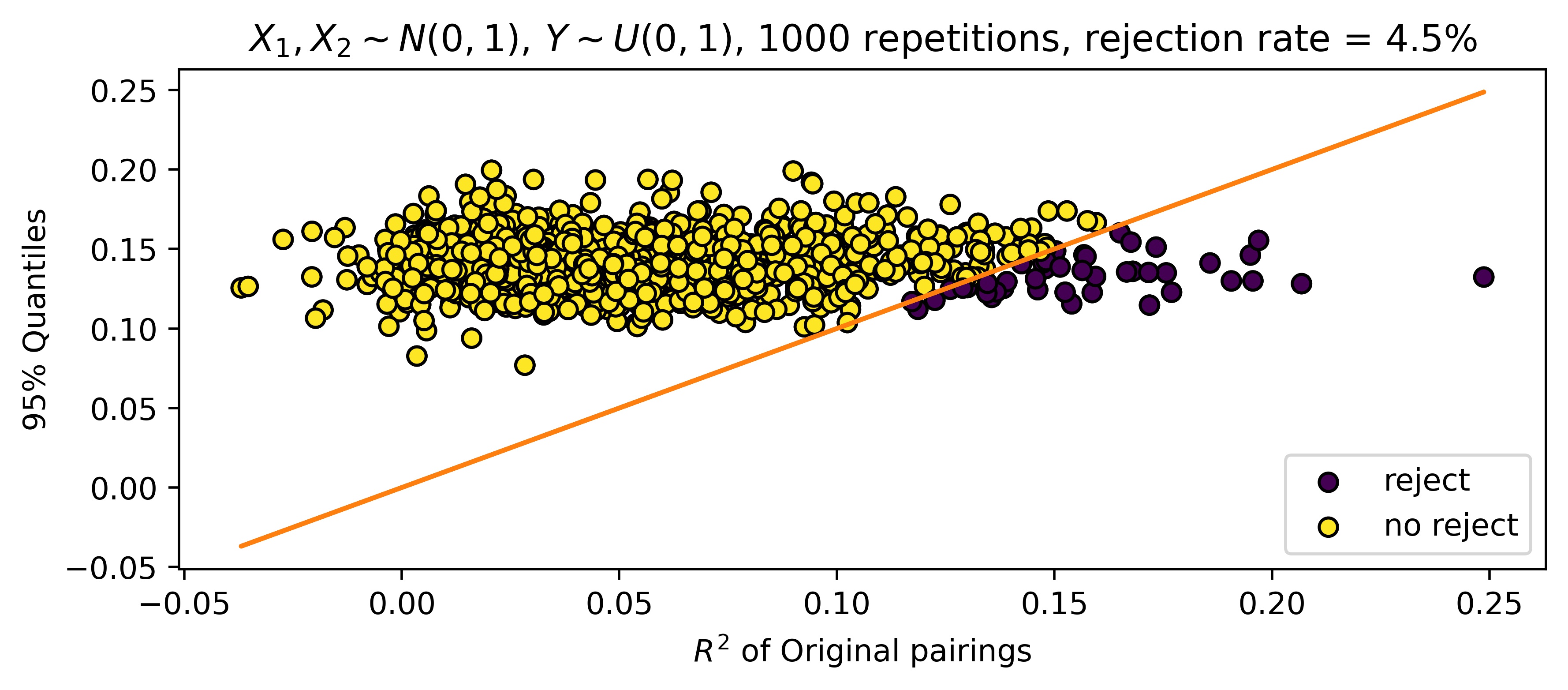}
		\caption{Scatterplot of the $R^2$ values for the original pairings against the 95\% quantiles of the empirical distribution of $R^2$. The orange line shows the identity function.}
		\label{fig:norelation_nn_repeats}
	\end{subfigure}
	\caption{Results of the permutation test for $\mathcal{F}_{\textrm{NN}(30, 30, 30)}$ with data generated in a following manner $X_1,X_2\sim\mathcal{N}(0,1)$ and $Y\sim U([0,1])$.}
	\label{fig:norelation_nn}
\end{figure}

Now, let $X_1\sim\mathcal{N}(1,1),X_2\sim\mathcal{N}(0,1)$ be independent and $Y=\log|X_1|+X_2^2+\epsilon$, where $\epsilon\sim\mathcal{N}(0,1)$ is the noise.
Consider a sample of size $100$.
For both $\mathcal{F}_{\textrm{LR}}$ and $\mathcal{F}_{\textrm{NN}}(30,30,30)$, the permutation test rejects the null hypothesis, since the values of $R^2$ for the original pairings are much higher than for any of the permuted pairings.
For the behavior of the test in a single example see fig. \ref{fig:relation_lin_reg_one} and \ref{fig:relation_nn_one}.
In this case the neural net outperforms the linear model significantly, thanks to its complexity.
Fig. \ref{fig:relation_lin_reg_repeats} and \ref{fig:relation_nn_repeats} show that the rejection rate in this case is quite high when repeating the experiment 1000 times, close to 95\% for the linear model and 94\% for the neural net.
This particular example illustrates the test's applicability in the case of a functional relation between predictors and responses.
The model is not just fitting the noise, there is some relation between predictors and responses.
It might not be captured well using a linear regression model, but the model is still able to capture more than pure noise.
Spearman's $\rho$ test and Kendall's $\tau$ test have also been performed in this example, but they show a slight difference from what we see in the case of the permutation test.
Spearman's $\rho$ test rejects the null hypothesis of independence of $X_1$ and $Y$ in $99.6\%$ of all cases and rejects the independence of $X_2$ and $Y$ in $9.5\%$ of all cases.
Similarly, Kendall's $\tau$ test rejects the null hypothesis of independence of $X_1$ and $Y$ in $99.7\%$ of all cases and rejects the independence of $X_2$ and $Y$ in $11.7\%$ of all cases.
This shows that the relationship between $X_1$ and $Y$ is easier to capture than the relationship between $X_2$ and $Y$, and with high probability the test will indicate that $X_1$ and $Y$ are not independent.
The relationship between $X_2$ and $Y$ is not as easy to capture using Kendall's $\tau$ or Spearman's $\rho$, which is to be expected due to the application of a nonlinear function with a minimum at the mean of $X_2$ when defining $Y$.

\begin{figure}
	\centering
	\begin{subfigure}[b]{\textwidth}
		\centering
		\includegraphics[width=\textwidth]{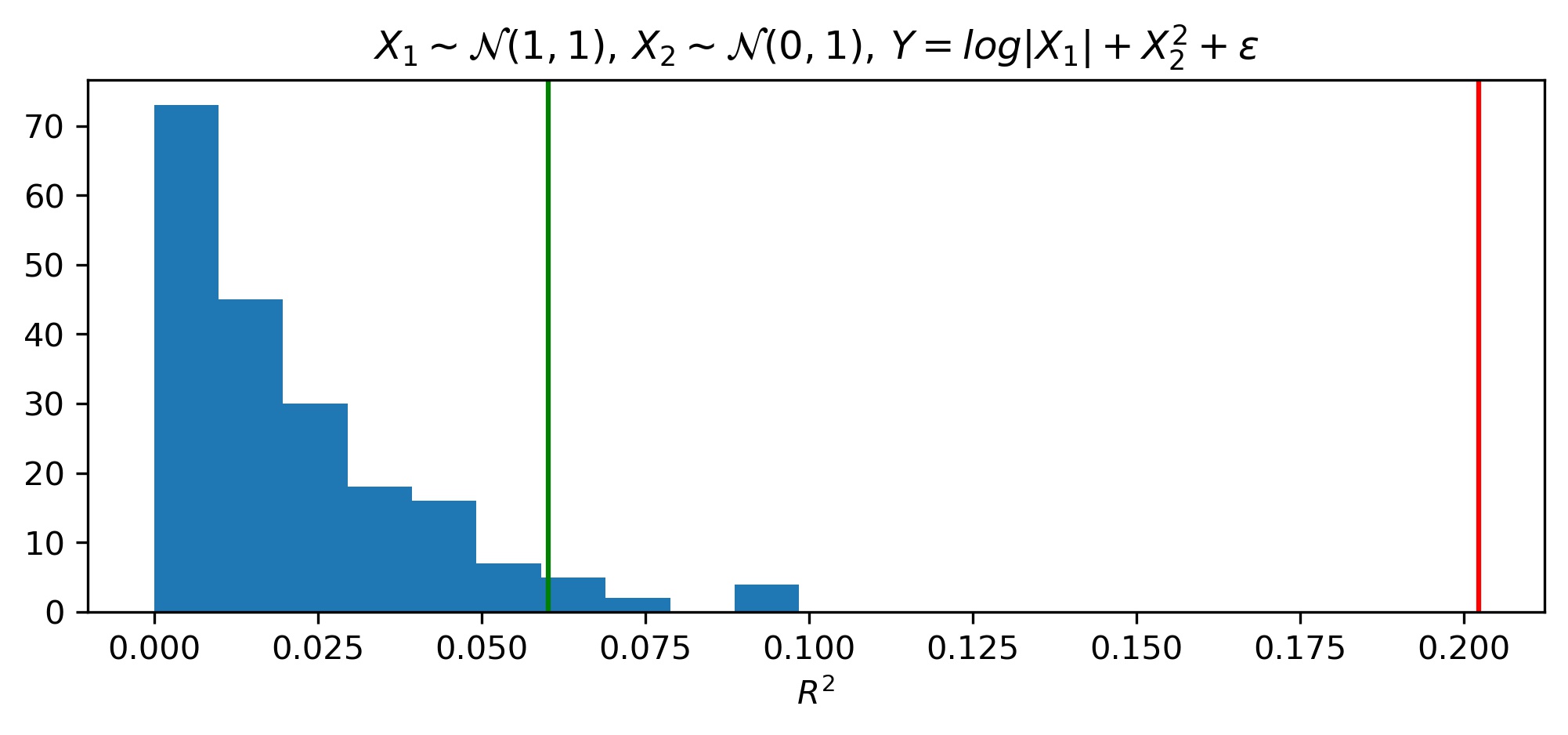}
		\caption{Histogram of the distribution of generated $R^2$ using permutation of $y$ values. The model considered here is linear regression with the class of functions $\mathcal{F}_{\textrm{LR}}$. The sample size is 100. The red line denotes the observed $R^2$ for the true pairings of $X$ and $Y$, the green line denotes the 95\%-quantile of the empirical distribution of $R^2$ (approximation using 200 permutations).}
		\label{fig:relation_lin_reg_one}
	\end{subfigure}
	\hfill
	\begin{subfigure}[b]{\textwidth}
		\centering
		\includegraphics[width=\textwidth]{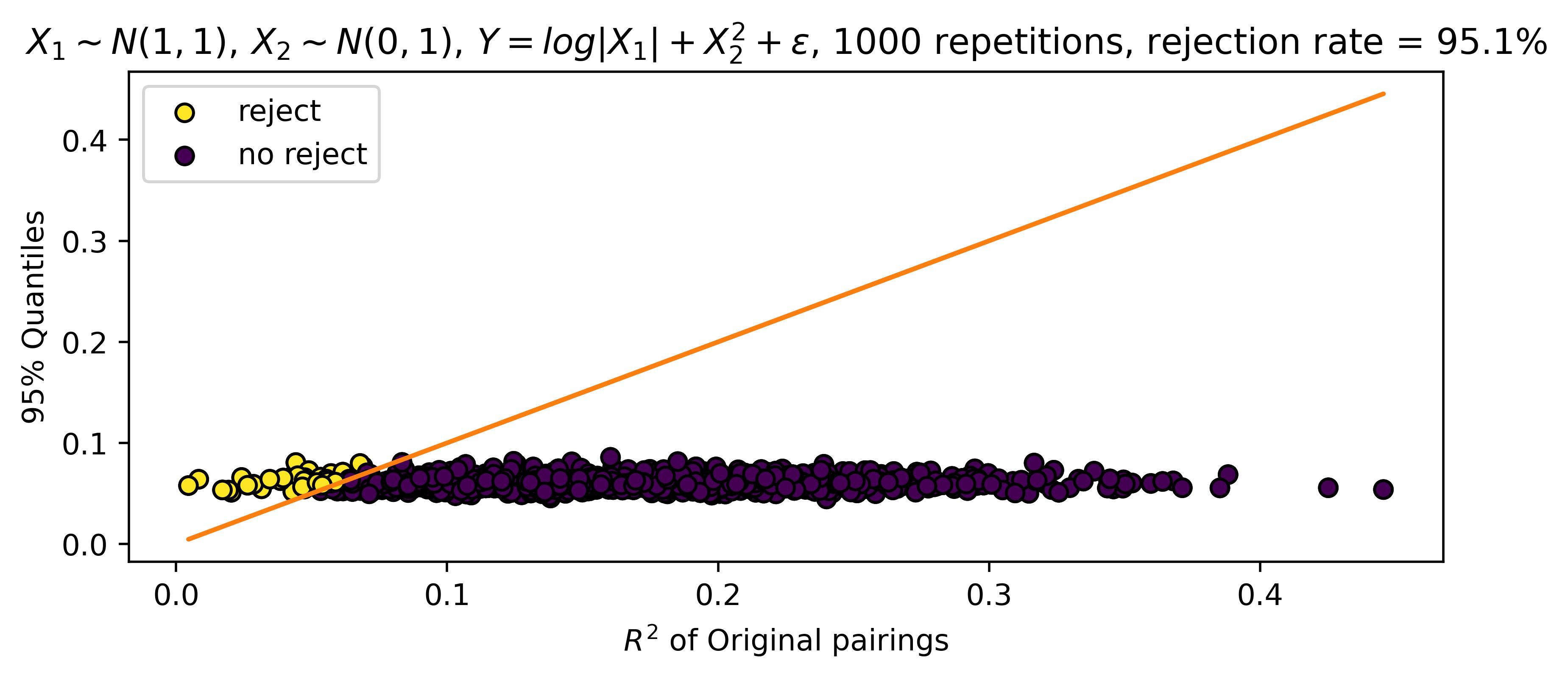}
		\caption{Scatterplot of the $R^2$ values for the original pairings against the 95\% quantiles of the empirical distribution of $R^2$. The orange line shows the identity function.}
		\label{fig:relation_lin_reg_repeats}
	\end{subfigure}
	\caption{Results of the permutation test for $\mathcal{F}_{\textrm{LR}}$ with data generated in a following manner $X_1\sim\mathcal{N}(1,1),X_2\sim\mathcal{N}(0,1)$ and $Y=\log|X_1|+X_2^2+\epsilon$, where $\epsilon\sim\mathcal{N}(0,1)$.}
	\label{fig:relation_lin_reg}
\end{figure}

\begin{figure}
	\centering
	\begin{subfigure}[b]{\textwidth}
		\centering
		\includegraphics[width=\textwidth]{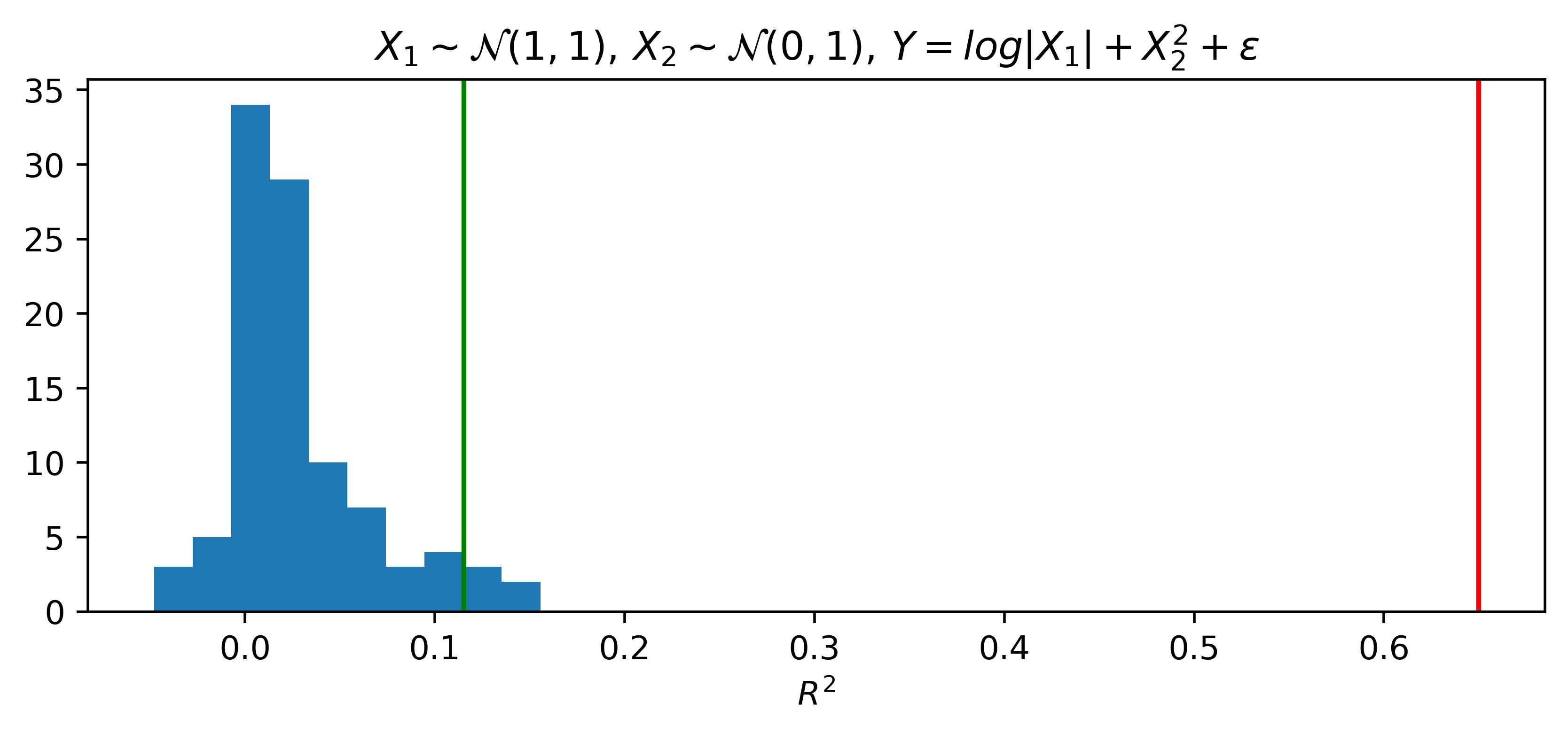}
		\caption{Histogram of the distribution of generated $R^2$ using permutation of $y$ values. The model considered here is a 3-layered neural net with the class of functions  $\mathcal{F}_{\textrm{NN}}(30,30,30)$. The sample size is 100. The red line denotes the observed $R^2$ for the true pairings of $X$ and $Y$, the green line denotes the 95\%-quantile of the empirical distribution of $R^2$ (approximation using 200 permutations).}
		\label{fig:relation_nn_one}
	\end{subfigure}
	\hfill
	\begin{subfigure}[b]{\textwidth}
		\centering
		\includegraphics[width=\textwidth]{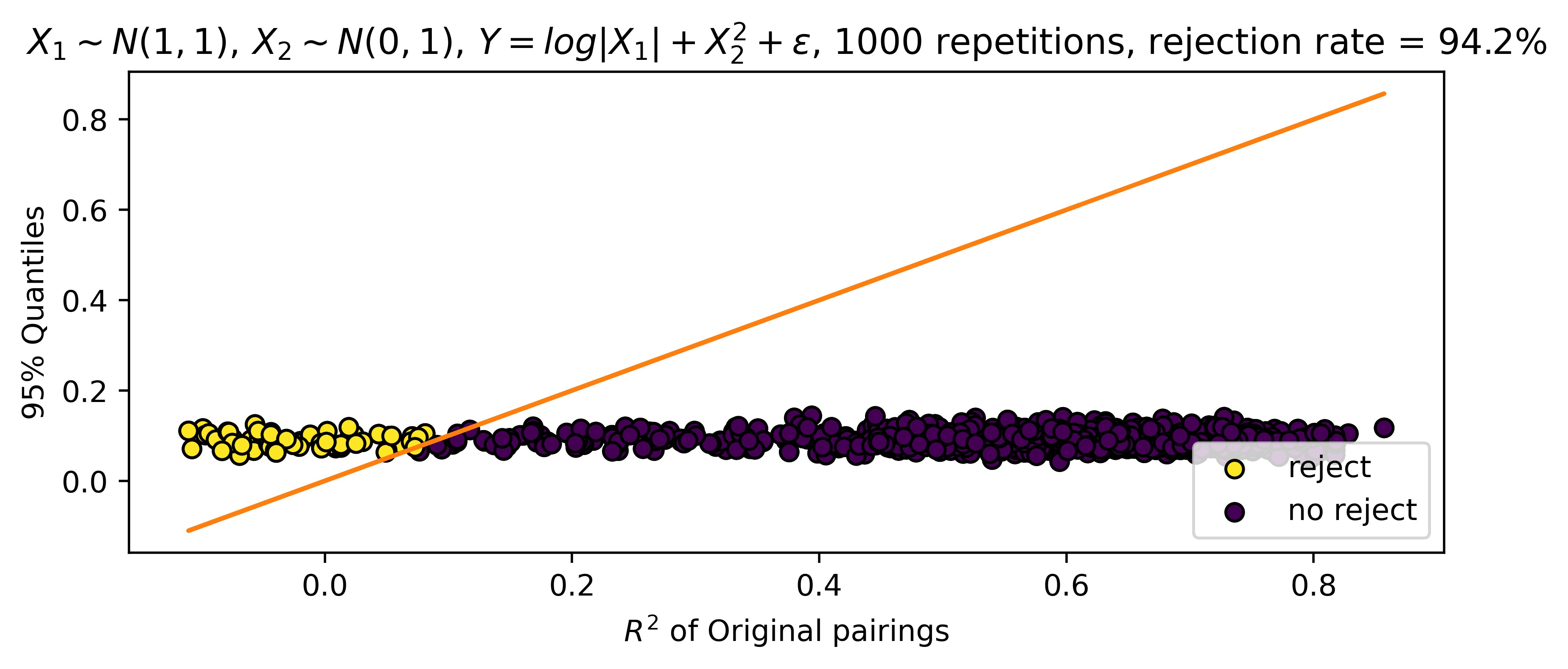}
		\caption{Scatterplot of the $R^2$ values for the original pairings against the 95\% quantiles of the empirical distribution of $R^2$. The orange line shows the identity function.}
		\label{fig:relation_nn_repeats}
	\end{subfigure}
	\caption{Results of the permutation test for $\mathcal{F}_{\textrm{NN}(30, 30, 30)}$ with data generated in a following manner $X_1\sim\mathcal{N}(1,1),X_2\sim\mathcal{N}(0,1)$ and $Y=\log|X_1|+X_2^2+\epsilon$, where $\epsilon\sim\mathcal{N}(0,1)$.}
	\label{fig:relation_nn}
\end{figure}

For the remaining scenarios in this section, we consider only the linear regression model with the class of functions $\mathcal{F}_{\textrm{LR}}$.
We inspect the influence of changing the distribution slightly in the test in order to ensure the statistical analysis using the test is reliable and accurate.
For $a\in\mathbb{R}$ let $X_1\sim\mathcal{N}(a,1),X_2\sim\mathcal{N}(0,0.1)$ be independent and $Y=\log|X_1|+X_2^2+\epsilon$, where $\epsilon\sim\mathcal{N}(0,0.1)$ is the noise.
Consider a sample of size $100$.
Note that the variance of $X_2$ has been decreased in comparison to the previous example.
Only for values of $a$ close to 0, the null hypothesis is not rejected (fig. \ref{fig:normal_mean_small_sd_one}).
This makes sense, since the logarithm changes most rapidly close to $0$ and for those arguments it is difficult to fit a linear function which describes this relationship well.
This pattern is the same with average rejection rate of $H_0$ when repeating the experiment 100 times for each value of $a$, see fig. \ref{fig:normal_mean_small_sd_repeats}.
For values of $a$ greater than 0.6, the $H_0$ is almost never rejected.
When the variance of $X_2$ increases to 0.5, the null hypothesis is no longer rejected for some values of $a$ larger than 5 (fig. \ref{fig:normal_mean_large_sd}).
This particular case shows the influence of available information on rejecting the null hypothesis.
The less informative predictors are the more likely it is not to reject the null hypothesis; we can see that as the parameter $a$ increases, the $\log|X_1|$ becomes flatter slowly losing its predictive value.
Meanwhile, the influence of $X_2^2$ on the value of $Y$ increases and given that the model can only predict linearly in $X_2$, the power of the test decreases.

\begin{figure}
	\centering
	\begin{subfigure}[b]{\textwidth}
		\centering
		\includegraphics[width=\textwidth]{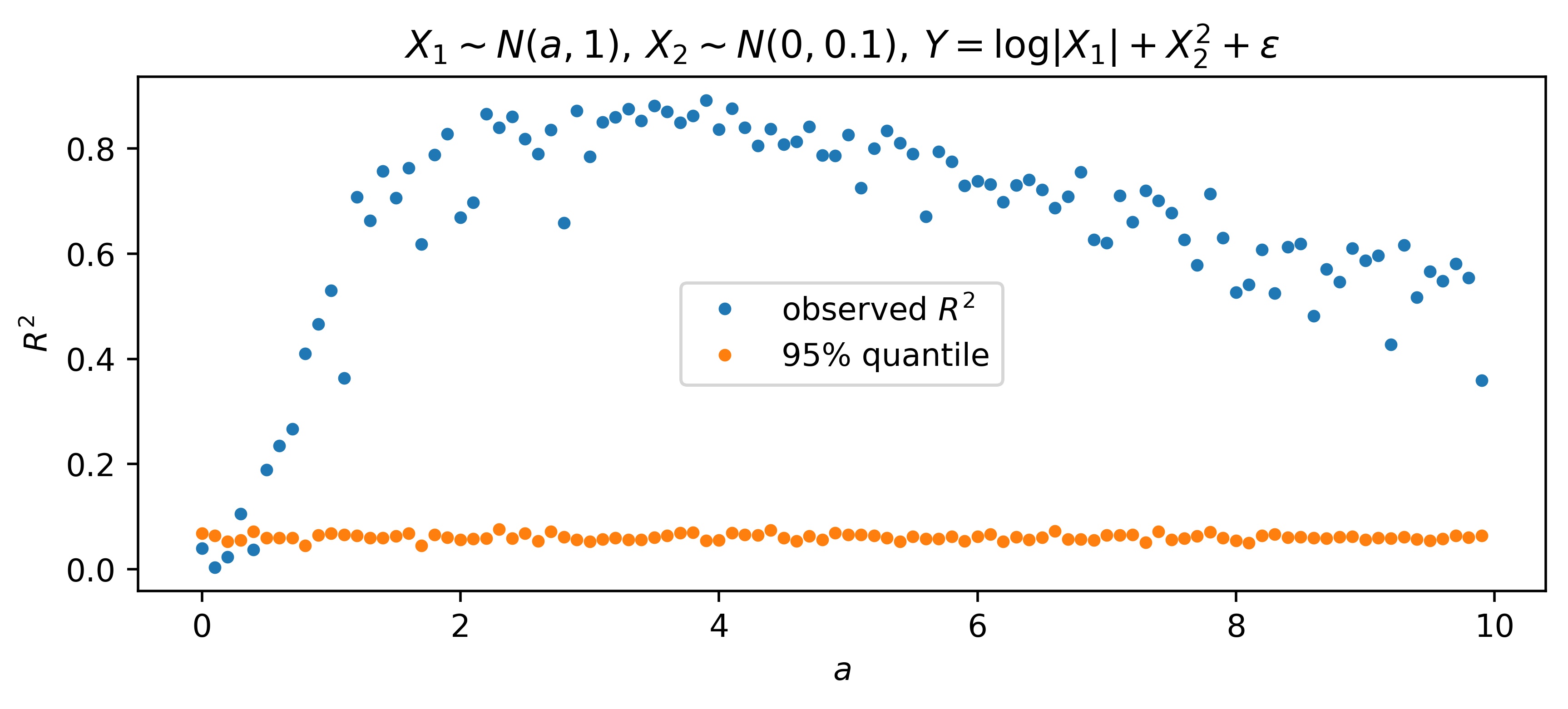}
		\caption{The plot shows the results of performing the permutation test for linear regression model with the class of functions $\mathcal{F}_{\textrm{LR}}$. The sample size is 100. The blue dots show the observed $R^2$ and the orange dots show the 95\% quantile of the empirical distribution of generated $R^2$ (approximation using 200 permutations). The test has been performed for values of $a$ ranging between 0 and 10.}
		\label{fig:normal_mean_small_sd_one}
	\end{subfigure}
	\hfill
	\begin{subfigure}[b]{\textwidth}
		\centering
		\includegraphics[width=\textwidth]{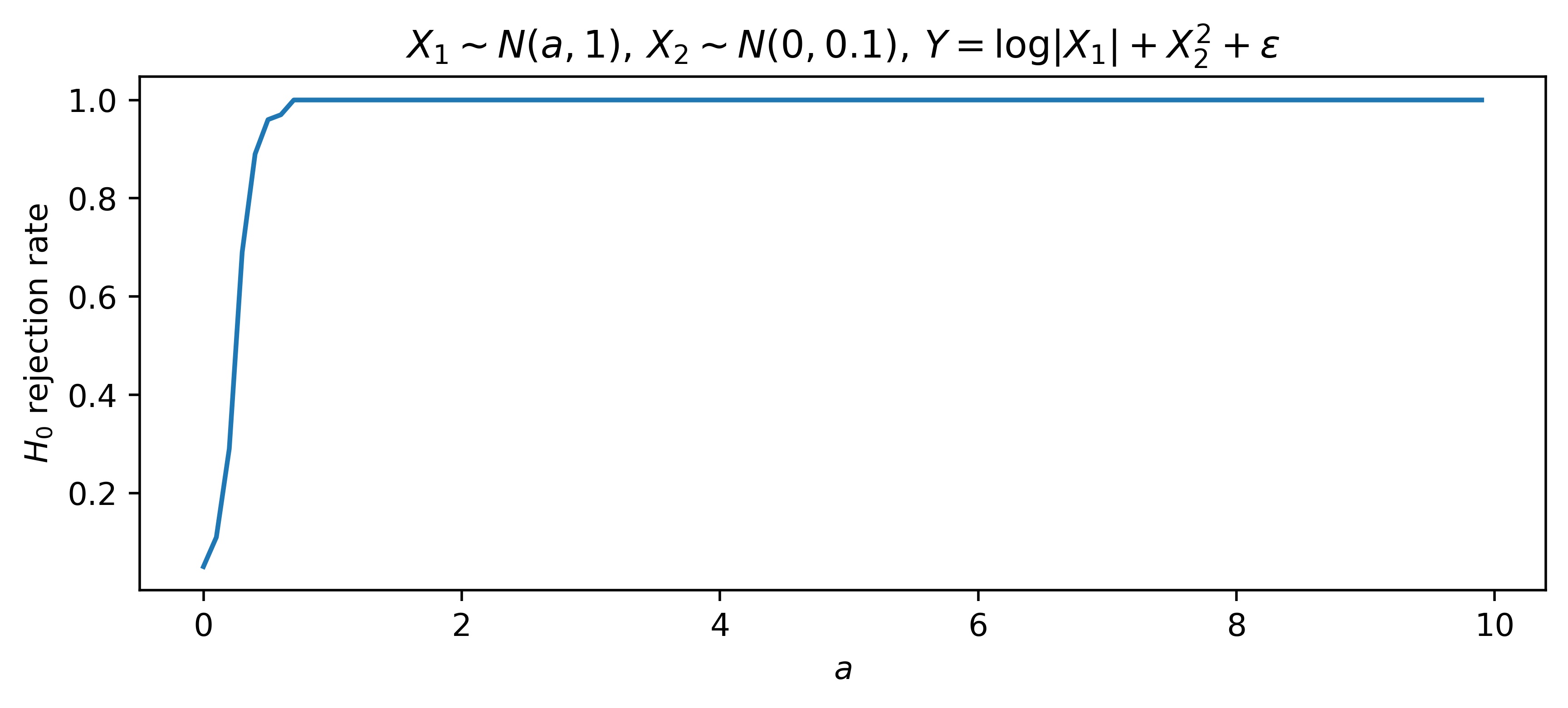}
		\caption{Average rejection rate of $H_0$ with parameter $a$ varying from $0$ to $1$. For each $a$ 100 repetitions were made.}
		\label{fig:normal_mean_small_sd_repeats}
	\end{subfigure}
	\caption{Results of the permutation test for $\mathcal{F}_{\textrm{LR}}$ with data generated in a following manner $X_1\sim\mathcal{N}(a,1),X_2\sim\mathcal{N}(0,0.1)$ and $Y=\log|X_1|+X_2^2+\epsilon$, where $\epsilon\sim\mathcal{N}(0,0.1)$.}
\end{figure}

\begin{figure}
	\centering
	\begin{subfigure}[b]{\textwidth}
		\centering
		\includegraphics[width=\textwidth]{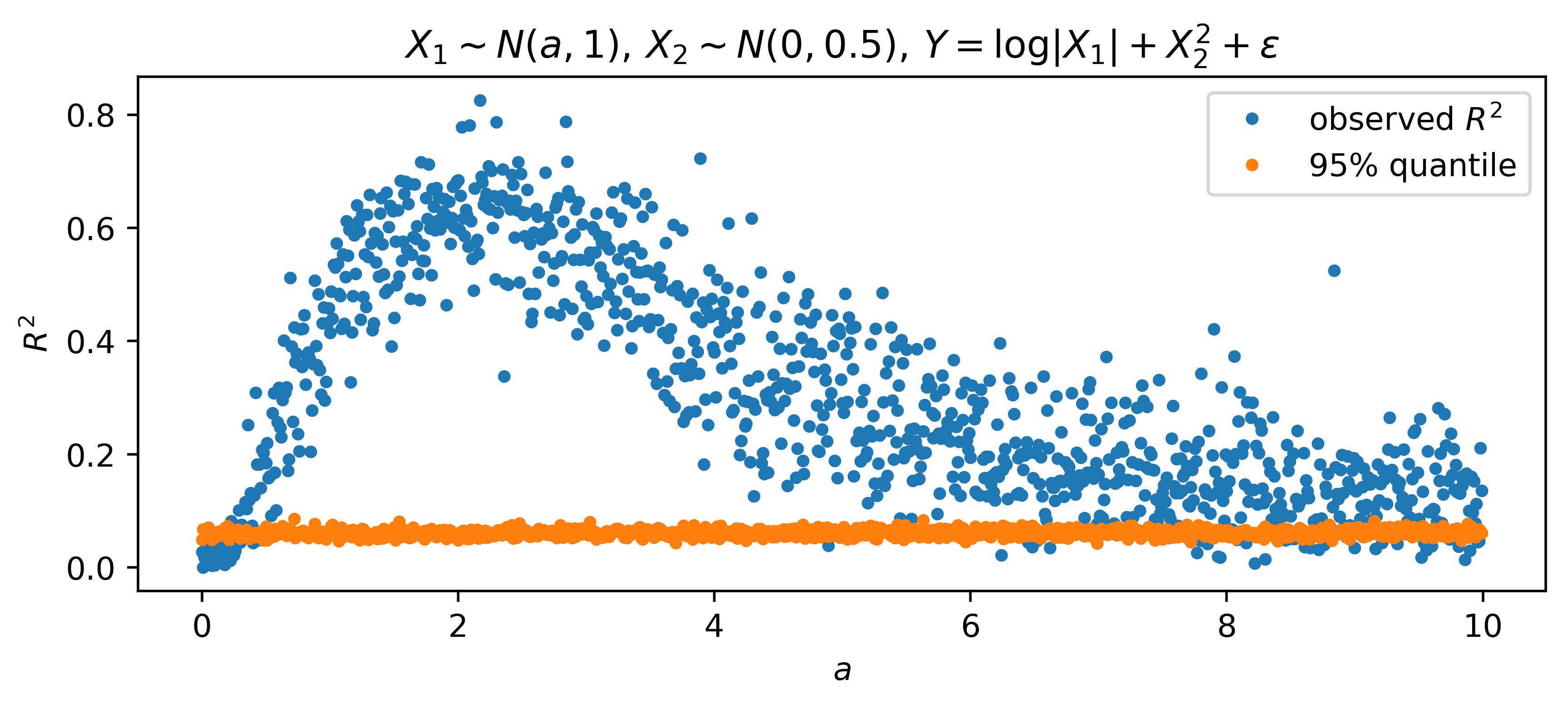}
		\caption{The plot shows the results of performing the permutation test for linear regression model with the class of functions $\mathcal{F}_{\textrm{LR}}$. The sample size is 100. The blue dots show the observed $R^2$ and the orange dots show the 95\% quantile of the empirical distribution of generated $R^2$ (approximation using 200 permutations). The test has been performed for values of $a$ ranging between 0 and 10.}
		\label{fig:normal_mean_large_sd_one}
	\end{subfigure}
	\hfill
	\begin{subfigure}[b]{\textwidth}
		\centering
		\includegraphics[width=\textwidth]{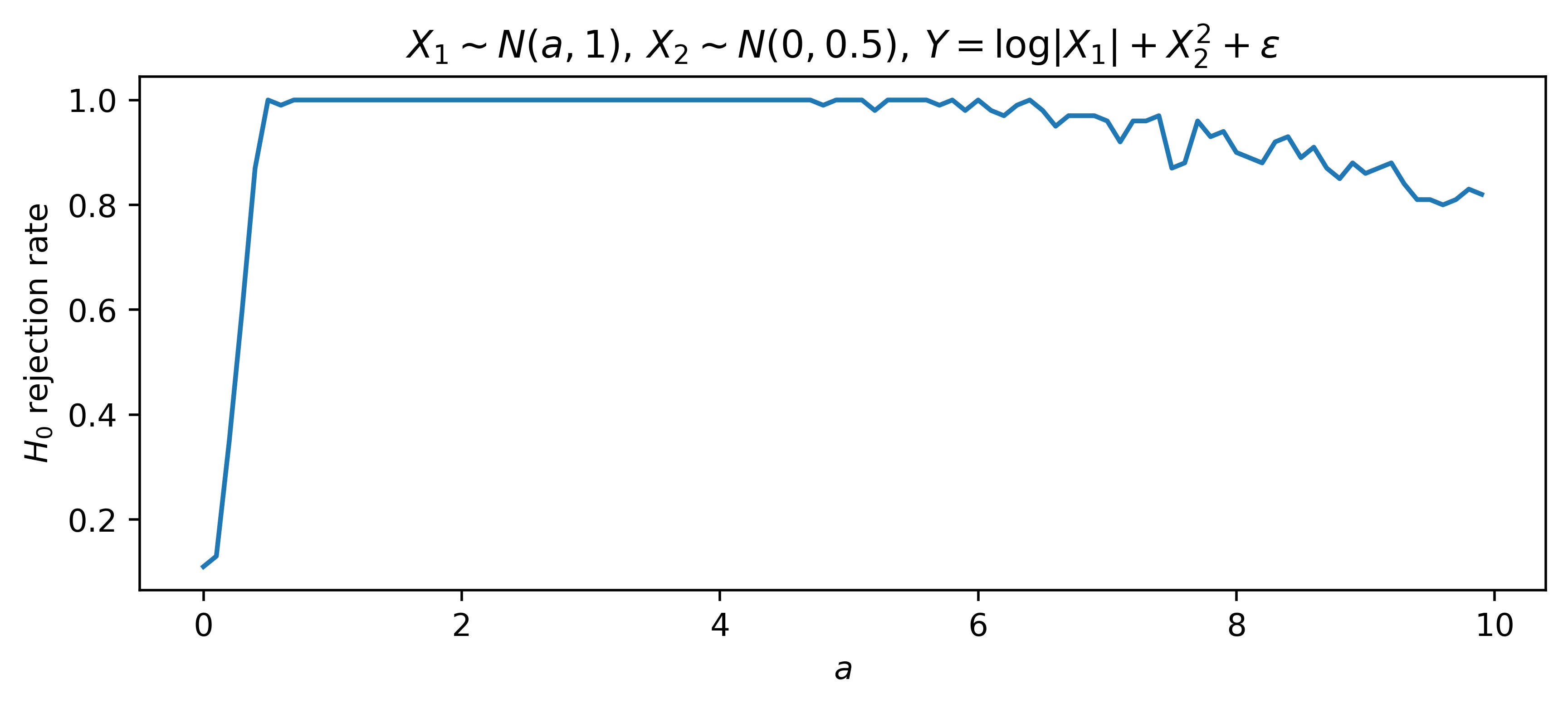}
		\caption{Average rejection rate of $H_0$ with parameter $a$ varying from $0$ to $10$. For each $a$ 100 repetitions were made.}
		\label{fig:normal_mean_large_sd_repeats}
	\end{subfigure}
	\caption{Results of the permutation test for $\mathcal{F}_{\textrm{LR}}$ with data generated in a following manner $X_1\sim\mathcal{N}(a,1),X_2\sim\mathcal{N}(0,0.5)$ and $Y=\log|X_1|+X_2^2+\epsilon$, where $\epsilon\sim\mathcal{N}(0,0.1)$.}
	\label{fig:normal_mean_large_sd}
\end{figure}

Fig. \ref{fig:no_relation_sample_size} and \ref{fig:relation_sample_size} show explicitly the influence of the sample size on the test's capability to reject $H_0$ for the linear regression model with the class of functions $\mathcal{F}_{\textrm{LR}}$.
In the case when $H_0$ is true (fig. \ref{fig:no_relation_sample_size}), the null hypothesis is rejected at a rate of 2-8\% on average regardless of the sample size.\footnote{Again the figure is showing the error rate for Pitman's test as a function of sample size.}
In the case when $H_0$ is false (fig. \ref{fig:relation_sample_size}), specifically with $Y=\log(X)+\epsilon$ for $\epsilon\sim\mathcal{N}(0,1)$, the null hypothesis is rejected much less for smaller sample sizes and the rejection rate increases as the sample size increases reaching close to 95\% at sample size 300.
We can conclude that the power of our test increases until the sample size of around 300, at which point the type II error is particularly low.
Meanwhile, the rejection of a true null hypothesis is rare, even for the smallest of sample sizes.

\begin{figure}
	\centering
	\begin{subfigure}[b]{\textwidth}
		\centering
		\includegraphics[width=\textwidth]{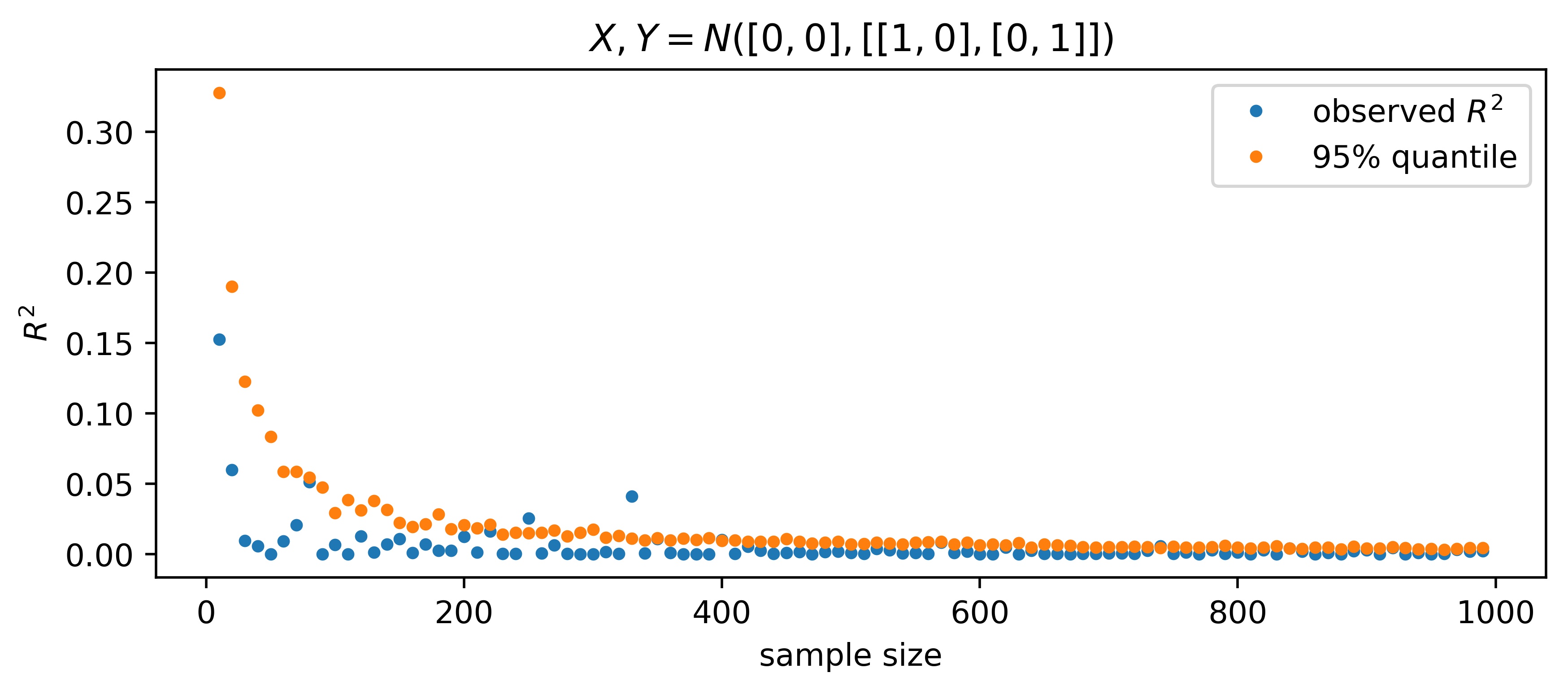}
		\caption{The plot shows the results of performing the permutation test for linear regression model with the class of functions $\mathcal{F}_{\textrm{LR}}$. The blue dots show the observed $R^2$ and the orange dots show the 95\% quantile of the empirical distribution of generated $R^2$ (approximation using 200 permutations). The test has been performed for sample sizes ranging between 10 and 1000.}
		\label{fig:no_relation_sample_size_one}
	\end{subfigure}
	\hfill
	\begin{subfigure}[b]{\textwidth}
		\centering
		\includegraphics[width=\textwidth]{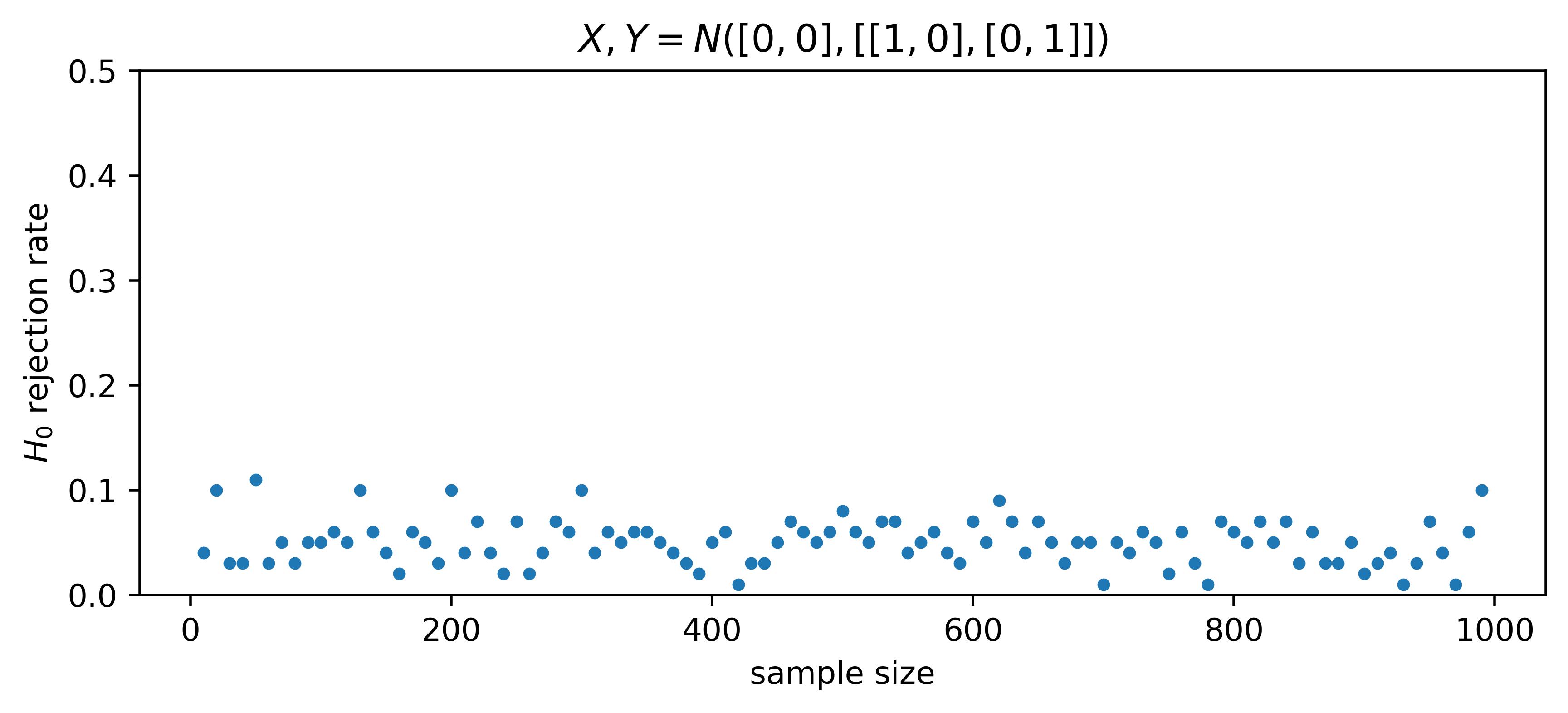}
		\caption{Average rejection rate of $H_0$ with sample size varying from $10$ to $1000$. For each sample size 100 repetitions were made.}
		\label{fig:no_relation_sample_size_repeats}
	\end{subfigure}
	\caption{Results of the permutation test for $\mathcal{F}_{\textrm{LR}}$ with data generated in a following manner $X,Y\sim N(\mu,\Sigma)$, $\mu=\begin{bmatrix}
			0 \\
			0 
		\end{bmatrix}$ and $\sigma=\begin{bmatrix}
			1 & 0\\
			0 & 1
		\end{bmatrix}$.}
	\label{fig:no_relation_sample_size}
\end{figure}

\begin{figure}
	\centering
	\begin{subfigure}[b]{\textwidth}
		\centering
		\includegraphics[width=\textwidth]{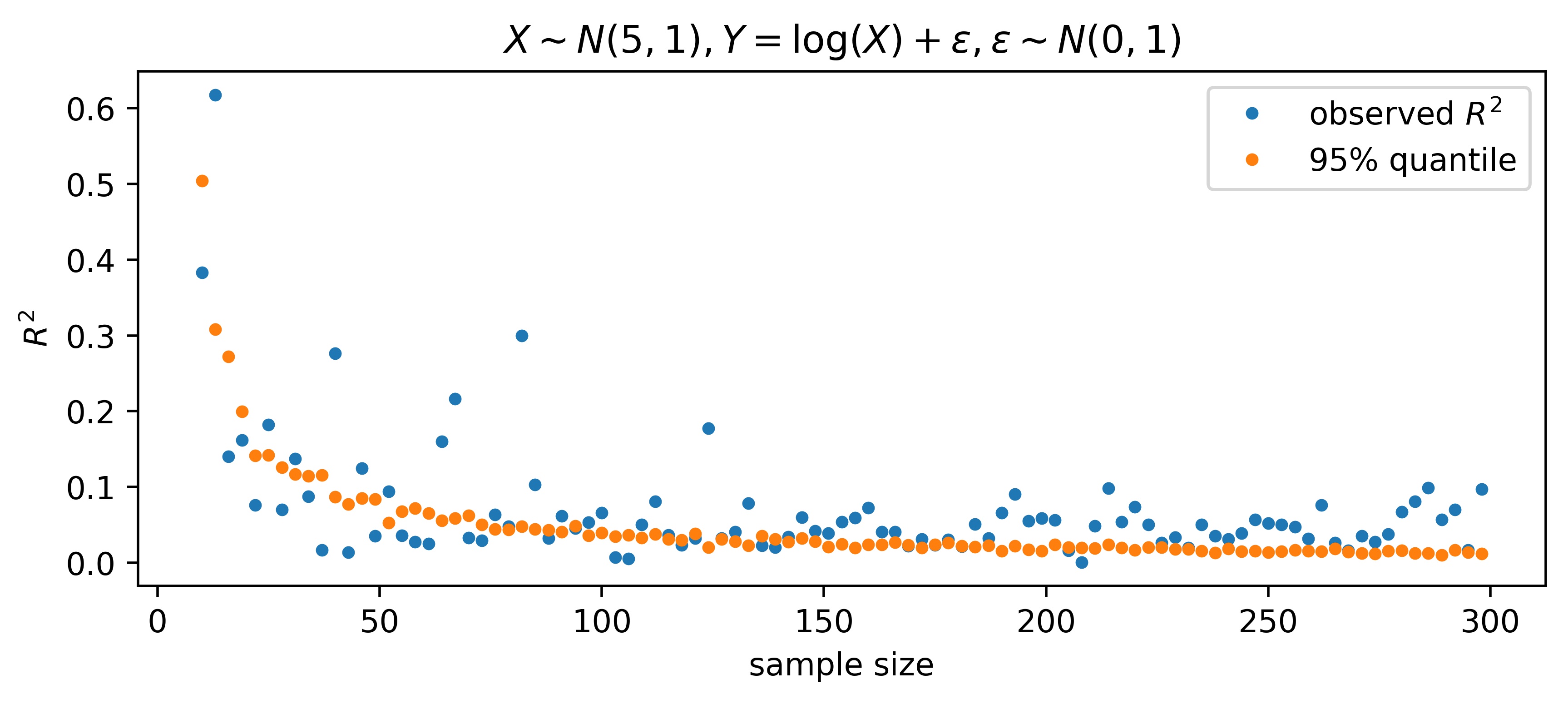}
		\caption{The plot shows the results of performing the permutation test for linear regression model with the class of functions $\mathcal{F}_{\textrm{LR}}$. The blue dots show the observed $R^2$ and the orange dots show the 95\% quantile of the empirical distribution of generated $R^2$ (approximation using 200 permutations). The test has been performed for sample sizes ranging between 10 and 300.}
		\label{fig:relation_sample_size_one}
	\end{subfigure}
	\hfill
	\begin{subfigure}[b]{\textwidth}
		\centering
		\includegraphics[width=\textwidth]{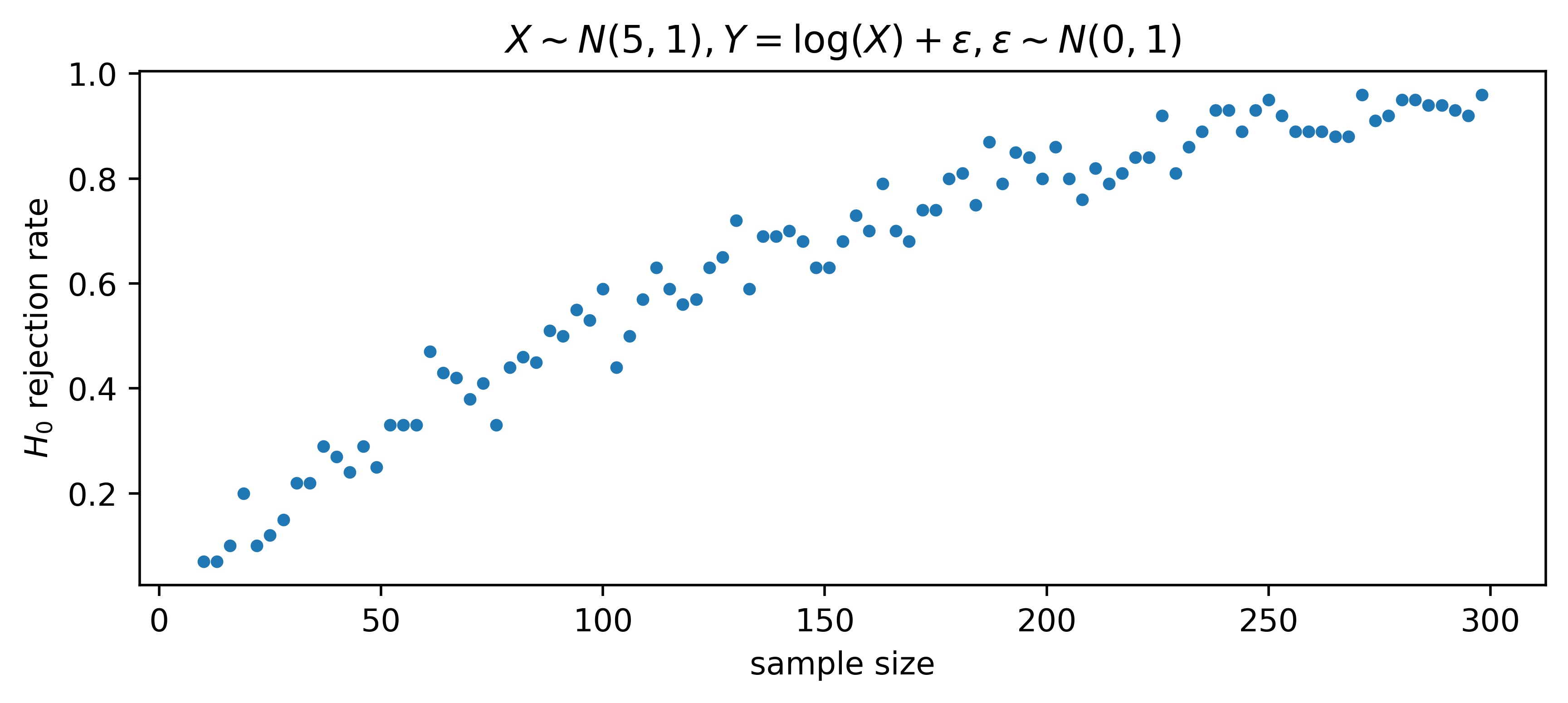}
		\caption{Average rejection rate of $H_0$ with sample size varying from $10$ to $300$. For each sample size 100 repetitions were made.}
		\label{fig:relation_sample_size_repeats}
	\end{subfigure}
	\caption{Results of the permutation test for $\mathcal{F}_{\textrm{LR}}$ with data generated in a following manner $X\sim\mathcal{N}(5,1)$ and $Y=\log|X|+\epsilon$, where $\epsilon\sim\mathcal{N}(0,1)$.}
	\label{fig:relation_sample_size}
\end{figure}

Using the bivariate normal distribution with varying correlation, we can empirically detect the point at which the test rejects $H_0$ for the linear regression model with the class of functions $\mathcal{F}_{\textrm{LR}}$ as the variables become more and more dependent.
Let $0\leq \rho\leq 1$ and $X,Y\sim N(\mu,\Sigma)$, such that
\begin{equation*}
	\mu=\begin{bmatrix}
		0 \\
		0 
	\end{bmatrix},
	\Sigma=
	\begin{bmatrix}
		1 & \rho\\
		\rho & 1
	\end{bmatrix}.
\end{equation*}
Fig. \ref{fig:correlation}  shows that as the correlation reaches 0.3, the test starts to reject $H_0$ almost always in case of sample size $n=100$.\footnote{Note that this figure is showing the error rate for Pitman's test as a function of sample size \cite{pitman37a, pitman37b}.}
We conclude that for sample size $n=100$, the dependence is only detectable reliably by the test when the correlation between variables is greater than 0.3.
This particular example shows that for a given sample size a certain threshold of correlation exists at which the test starts to reject the null hypothesis.
As the correlation increases the rejection becomes more and more likely for a given sample size.

\begin{figure}
	\centering
	\begin{subfigure}[b]{\textwidth}
		\centering
		\includegraphics[width=\textwidth]{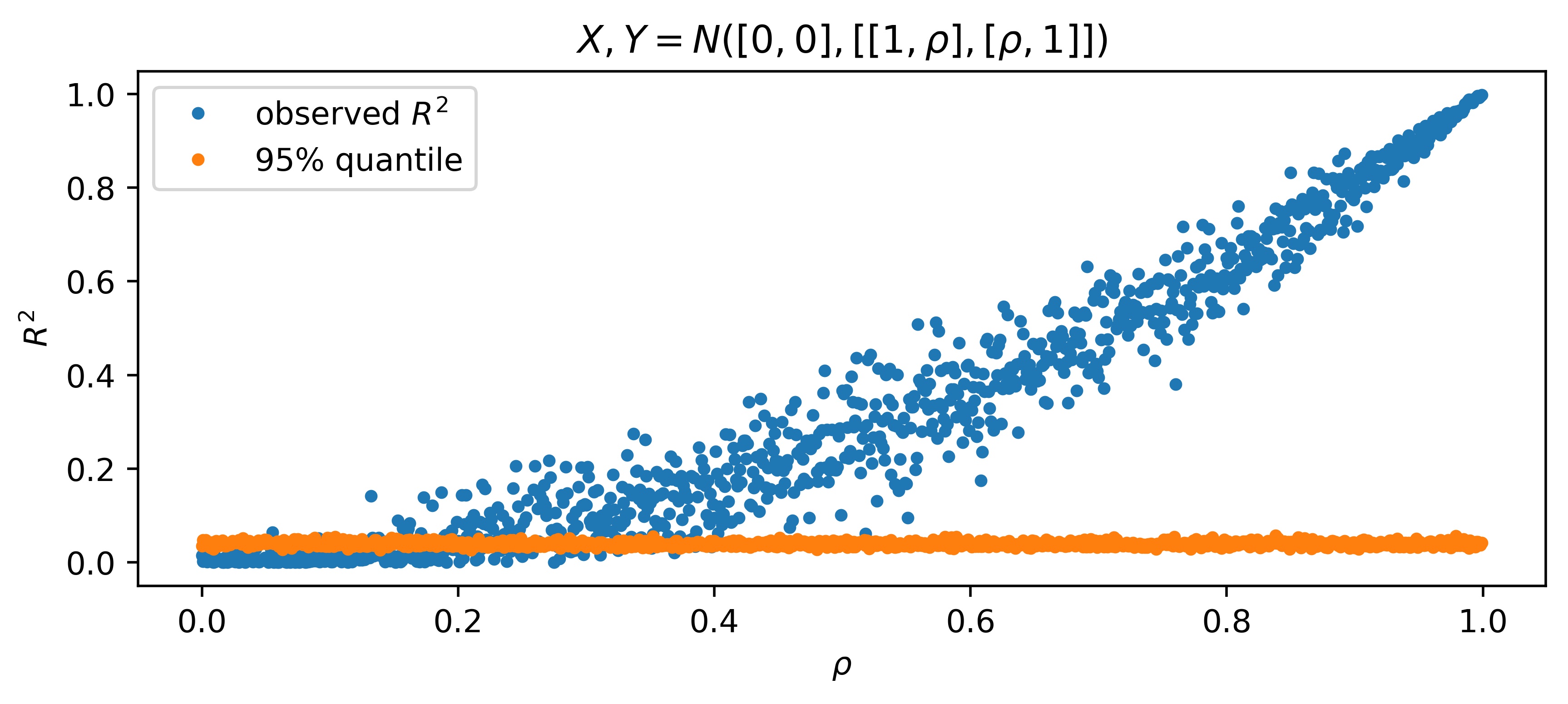}
		\caption{The plot shows the results of performing the permutation test for linear regression model with the class of functions $\mathcal{F}_{\textrm{LR}}$. The sample size is 100. The blue dots show the observed $R^2$ and the orange dots show the 95\% quantile of the empirical distribution of generated $R^2$ (approximation using 200 permutations). The test has been performed for values of correlation $\rho$ ranging between 0 and 1.}
		\label{fig:correlation_one}
	\end{subfigure}
	\hfill
	\begin{subfigure}[b]{\textwidth}
		\centering
		\includegraphics[width=\textwidth]{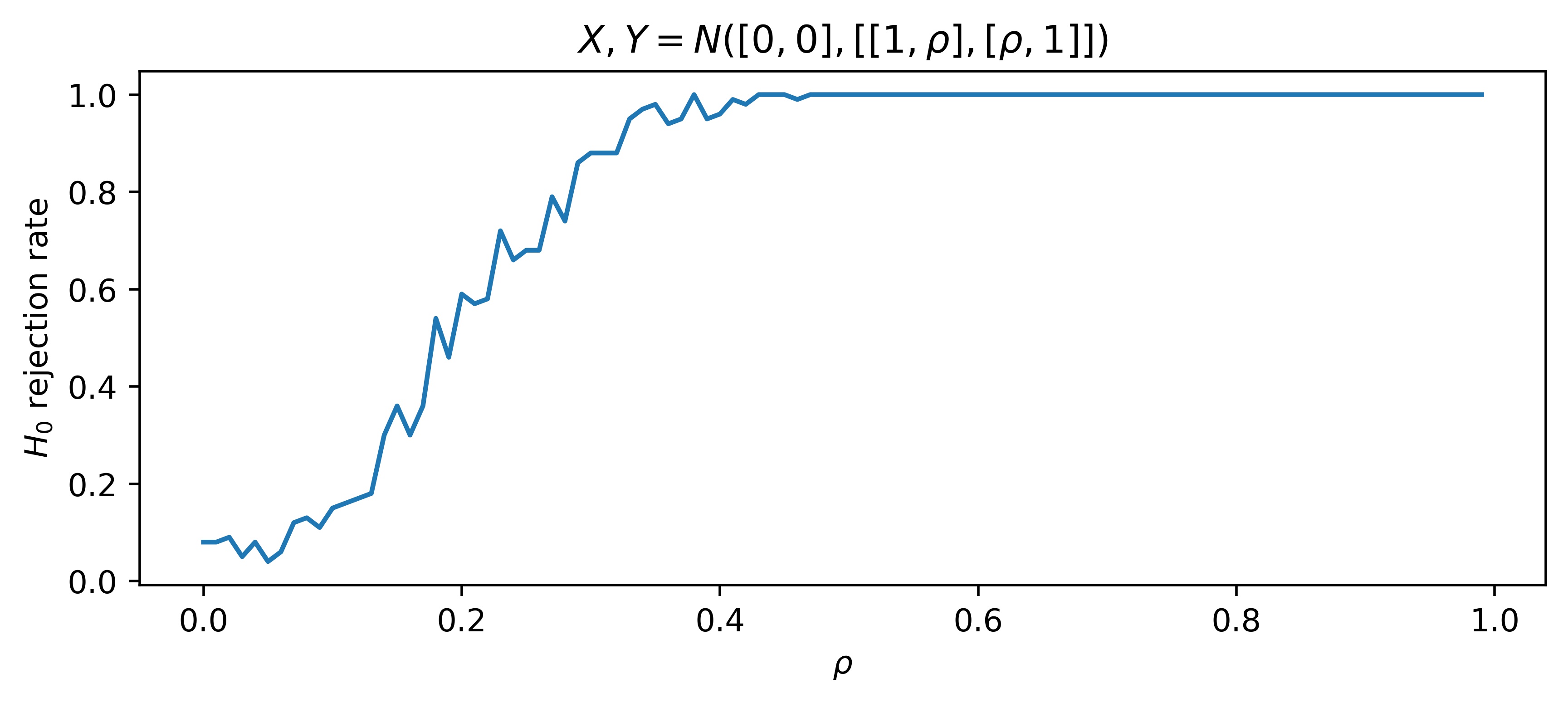}
		\caption{Average rejection rate of $H_0$ with parameter $\rho$ varying from $0$ to $1$. For each $\rho$ 100 repetitions were made.}
		\label{fig:correlation_repeats}
	\end{subfigure}
	\caption{Results of the permutation test for $\mathcal{F}_{\textrm{LR}}$ with data generated in a following manner $X,Y\sim N(\mu,\Sigma)$, $\mu=\begin{bmatrix}
			0 \\
			0 
		\end{bmatrix}$ and $\Sigma=\begin{bmatrix}
			1 & \rho\\
			\rho & 1
		\end{bmatrix}$.}
	\label{fig:correlation}
\end{figure}

Lastly, we present a comparison of our permutation test with a permutation test found in \cite{pesarin10}.
This is also a test for no effect, but specifically in the linear regression model.
Its formulation requires a sample of $n$ i.i.d. observations $\{(X_1,Y_1),\dots,(X_n,Y_n)\}$ from a bivariate variable $(X,Y)$.
We assume that the variables are linked by a linear regression $\mathbb{E}(Y|X=x)=\alpha+\beta\cdot x$, where $\alpha,\beta\in\mathbb{R}$.
The null hypothesis considered for this test is $\beta=0$, under the assumption that responses $Y_i$ can be permuted with respect to covariate $X$.
The test statistic is $T^*_\beta=\sum_i X_iY_i$ and the permutation of $Y_i$ is used when approximating the distribution of the test statistic under $H_0$.
We refer to this test as the permutation test for linear regression after the naming convention in \cite{pesarin10}.
Note that in practice the only difference between our approaches is the choice of test statistic.
In their case, the choice of the test statistic is driven by the null hypothesis.
In our test, the test statistic can be chosen freely as long as it can be calculated using the sample $\{(f(x_i), y_i)\}_i$, which technically means we could use $T^\ast_\beta$ as the test statistic.
In that sense, we can view our test as the generalization of the test for linear regression.

We continue using bivariate normal variables $X$ and $Y$.
We compare the average rejection rate of $H_0$ for both tests with parameter $\rho$ varying from $0$ to $1$.
Fig. \ref{fig:comparison} shows the comparison between the tests.
For sample size $n=100$, the permutation test for linear regression detects the dependence for a slightly smaller $\rho$ than our permutation test, but both reach the rejection rate of 1 at $\rho\approx0.4$.
We conclude that a context-specific test statistic, in this case $T^\ast_\beta$, outperforms more general statistic.
At the same time, our test can use $T^\ast_\beta$ as the test statistic.

\begin{figure}
	\centering
	\includegraphics[width=\textwidth]{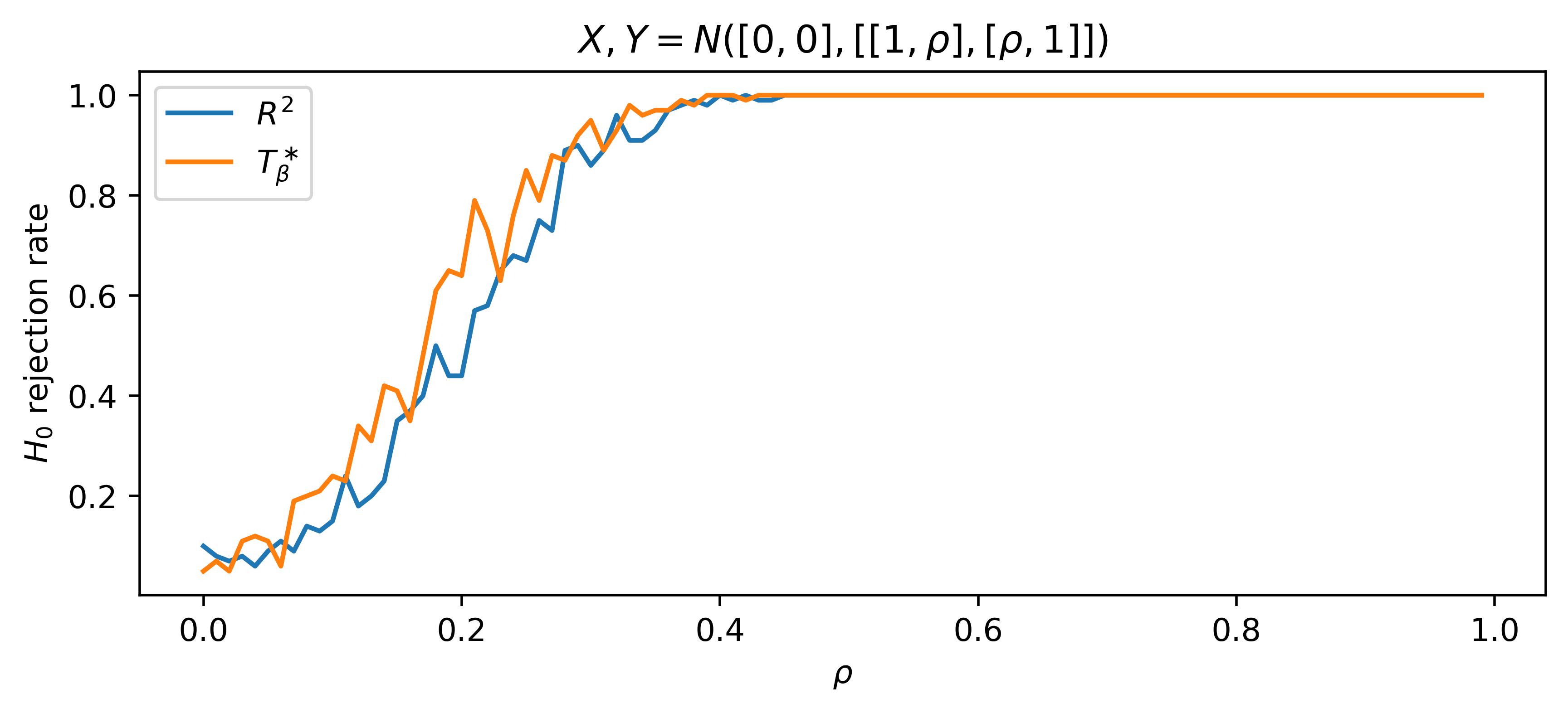}
	\caption{Average rejection rate of $H_0$ with parameter $\rho$ varying from $0$ to $1$. For each $\rho$ 100 repetitions were made. Two different tests were considered. Blue line was generated using $R^2$ as the test statistic, while the orange line was generated using $T^\ast_\beta$ as the test statistic.}
	\label{fig:comparison}
\end{figure}

\end{document}